\documentclass[aps,prx,reprint,floatfix,amsmath,amssymb,nofootinbib,bibnotes,longbibliography,superscriptaddress]{revtex4-2}

\usepackage{tablefootnote}
\usepackage{array}[=2016-10-06]
\usepackage[T1]{fontenc}
\makeatletter
\providecommand{\tabularnewline}{\\}
\makeatother
\usepackage{varwidth}
\usepackage{stmaryrd}
\usepackage{graphicx} 	
\usepackage{tikz}
\usepackage{calc}

\usepackage{amsthm}
\usepackage{amssymb}
\usepackage{amsmath}
\usepackage{enumitem}
\usepackage{tabularx}
\usepackage{algorithm}
\usepackage{algpseudocode}
\usepackage{DejaVuSans}
\usepackage{sansmath}

\newtheorem{proposition}{Proposition}

\newtheorem{example}{Example}

\newtheorem{observation}{Observation}

\usepackage[acronym]{glossaries}
\usepackage{xurl}
\usepackage{hyperref}
\usepackage{titletoc}

\usepackage[T1]{fontenc}
\usepackage{lmodern}
\usetikzlibrary{arrows.meta,calc,positioning,fit,decorations.pathreplacing}

\newsavebox{\memorypanelbox}
\newsavebox{\resultspanelbox}

\usepackage{hyperref}
\hypersetup{
    colorlinks=true,
    linkcolor=blue,
    filecolor=blue,
    citecolor=blue,  
    urlcolor=blue,
    linktocpage=true
}

\makeglossaries

\glsdisablehyper

\algrenewcommand\algorithmicrequire{\textbf{Input:}}
\algrenewcommand\algorithmicensure{\textbf{Output:}}
\algnewcommand\Input{\item[\algorithmicrequire]}
\algnewcommand\Output{\item[\algorithmicensure]}
\algrenewcommand\algorithmiccomment[1]{\hfill{\color{gray}\texttt{\textbackslash\textbackslash}~#1}}
\algnewcommand\LComment[1]{{\color{gray}\texttt{\textbackslash*}~#1~\texttt{*\textbackslash}}}

\newcommand{\E}[1]{\langle #1 \rangle}
\newcommand{\bsk}{\boldsymbol{k}}

\newcommand{\bsph}{\boldsymbol{\phi}}

\newcommand{\bbra}[1]{\langle\!\langle #1|}
\newcommand{\kket}[1]{|#1\rangle\!\rangle}
\newcommand{\ket}[1]{|#1\rangle}

\newcommand{\V}{\mathbb{V}}

\newcommand{\bst}{\boldsymbol{\theta}}
\newcommand{\me}{\mathsf{e}}

\newcommand{\tr}{\operatorname{Tr}}

\newacronym{p2lc}{P2LC}{physical-to-logical characterization}
\newacronym{zne}{ZNE}{Zero-Noise Extrapolation}
\newacronym{lep}{LEP}{logical error probability}
\newacronym{lem}{LEM}{\emph{logical error mitigation}}
\newacronym{pea}{PEA}{Pauli error amplification}
\newacronym{qp}{QP}{quasi-probability}
\newacronym{em}{EM}{error mitigation}
\newacronym{ec}{EC}{error corrected}
\newacronym{av}{AV}{\emph{active volume}}
\newacronym{ivw}{IVW}{inverse variance weighting}

\newacronym{mname}{SOLEM}{\emph{syndrome-based offline} logical error mitigation}

\DeclareMathOperator*{\argmin}{arg\,min}

\usepackage[english]{babel}
\usepackage[dvipsnames,svgnames,x11names]{xcolor}

\usepackage{etoolbox}

\makeatletter
\AtBeginDocument{%
  \patchcmd{\NAT@sort@cites}
    {\expandafter\immediate\expandafter\write}
    {\expandafter\write}
    {}
    {\PackageError{citeorder}
      {Citation-order patch failed}
      {Check the natbib version}}%
}
\makeatother

\begin{document}

\title{
Error Mitigation From a Single Unmodified Error-Corrected Circuit
}

\begin{abstract}

Recent work has shown that syndrome information obtained during error-corrected quantum computation can be utilized for \textit{logical error mitigation} (LEM), yielding exponential reductions in runtime overhead compared to ``external'' LEM methods, which do not make use of syndrome data. 
However, these \textit{syndrome-aware} LEM methods (SALEM) generally require adaptive circuit operations based on real-time processing of syndromes.
Here, we introduce \textit{syndrome-based offline} logical error mitigation (SOLEM), 
a framework that utilizes syndrome data entirely in post-processing, without \textit{any} modification to the given error-corrected circuit. 
In this sense, SOLEM is ``built into'' error-corrected circuits, and
is in fact simpler to implement than external LEM protocols. 
The basic idea of SOLEM is to use naturally occurring syndrome-conditioned logical error channels in place of deliberate circuit modifications. We develop unbiased and heuristic estimators within this framework, including offline logical versions of probabilistic error cancellation (PEC) and zero-noise extrapolation (ZNE), 
as well as a global-regression estimator.

SOLEM's fully offline nature leads to three significant advantages.
First, SOLEM can post-process syndromes with a more powerful decoder than the one used in real time.
Second, multiple estimators can be constructed from the same measured data and optimized using empirical statistical errors and related performance metrics.
Third, logical error characterization is needed only in post-processing, allowing characterization data to be collected during circuit execution and reducing sensitivity to hardware drift.
Remarkably, SOLEM estimators can achieve these advantages while attaining a runtime overhead which is exponentially lower than that of external LEM and is comparable to that of corresponding SALEM estimators.

\end{abstract}

\author{Maor Shutman}
\email{maor.shutman@qedma.com}
\affiliation{Qedma Quantum Computing, Tel Aviv, Israel}

\author{Dorit Aharonov}
\affiliation{Qedma Quantum Computing, Tel Aviv, Israel}
\affiliation{The Benin School of Computer Science and Engineering, Hebrew University, Jerusalem, Israel}

\author{Itai Arad}
\affiliation{Qedma Quantum Computing, Tel Aviv, Israel}

\author{Yosi Atia}
\affiliation{Qedma Quantum Computing, Tel Aviv, Israel}

\author{Eyal Bairey}
\affiliation{Qedma Quantum Computing, Tel Aviv, Israel}

\author{Zvika Brakerski}
\affiliation{Qedma Quantum Computing, Tel Aviv, Israel}
\affiliation{Faculty of Mathematics and Computer Science,
Weizmann Institute of Science, Israel}

\author{Barak A.~Katzir}
\affiliation{Qedma Quantum Computing, Tel Aviv, Israel}

\author{Netanel H.~Lindner}
\affiliation{Qedma Quantum Computing, Tel Aviv, Israel}
\affiliation{Department of Physics, Technion, Haifa, Israel}

\author{Omri Golan}
\email{omri.golan@qedma.com}
\affiliation{Qedma Quantum Computing, Tel Aviv, Israel}

\author{Ilya Gurwich}
\affiliation{Qedma Quantum Computing, Tel Aviv, Israel}

\maketitle

\begin{figure*}[t]
    \centering
    \resizebox{\textwidth}{!}{\input{solem_figure1.tex}}
    \caption{\textbf{Conceptual overview of SOLEM.}
We compare \textit{syndrome-based offline} logical error mitigation (SOLEM), introduced here, with the previously proposed \textit{external} LEM (ExtLEM) and \textit{syndrome-aware} LEM (SALEM). As an example, logical versions of \textit{probabilistic error cancellation} (PEC) are shown. In an error-corrected circuit (EC circuit), each faulty layer \(G\) can be described as an ideal logical layer \(g\) followed by a logical error channel \(\Lambda_L\) (see Sec.~\ref{sec:setup}). Conditioned on the syndrome $s$ measured within $G$ (and, implicitly, also in past layers) the relevant channel is \(\Lambda_{L\mid s}\), occurring with probability \(p_s\). Purple indicates syndromes $s$ and their usage, while yellow indicates circuit modifications, labeled by $\sigma$, and their classical preparation. Bold symbols $\boldsymbol{s},\boldsymbol{\sigma}$ indicate full-circuit records of $s,\sigma$. 
\textbf{(a)} ExtLEM applies a conventional EM method targeting the channel \(\Lambda_L\). For PEC, one solves in pre-processing for a quasi-probability (QP) distribution \(q_\sigma\) over circuit modifications \(B_\sigma\) that effectively replaces \(\Lambda_L\) by the identity channel \(I\); then samples modifications from an optimized probability distribution \(p_\sigma\propto |q_\sigma|\). In post-processing, outcomes are averaged using the QP signs and normalization \(Q=\lVert q\rVert_1\), producing an unbiased estimator \(\widehat{O}_{\rm ExtLEM}\) for the ideal logical expectation value \(\langle O\rangle \), with shot overhead $\leq Q^2$.
{\bf (b)} In SALEM \cite{aharonov2025syndromeawaremitigationlogical}, mitigation is conditioned on the syndrome \(s\) measured during execution, targeting the syndrome-conditioned channel \(\Lambda_{L\mid s}\). This involves conditioned QP distributions \(q_{\sigma\mid s}\) and corresponding distributions \(p_{\sigma\mid s}\propto |q_{\sigma|s}|\), from which modifications \(B_\sigma\) are sampled adaptively. Outcomes are combined using the conditional QP signs and norms
$Q_{\boldsymbol{s}}$, as well as  inverse-variance
weights $w_i\propto Q_{\boldsymbol{s}_i}^{-2}$. The resulting estimator \(\widehat{O}_{\rm SALEM}\) generically has an exponentially lower shot overhead relative to \(\widehat{O}_{\rm ExtLEM}\).
{\bf (c)} SOLEM executes the given EC circuit without \textit{any} modifications \(B_\sigma\), requiring no added classical processing during or before circuit execution. In post-processing, the measured syndrome and observable records are combined, using a QP distribution \(q_s\) which (ideally) reproduces the identity channel directly from the naturally occurring channels \(\Lambda_{L\mid s}\). The latter play the role of modified channels $B_\sigma\Lambda_L$ in ExtLEM. Although the natural syndrome distribution \(p_s\) is not optimized to be \(\propto |q_s|\), the SOLEM estimator \(\widehat{O}_{\rm SOLEM}\) can admit an exponentially reduced shot overhead relative to ExtLEM, comparable to that of SALEM. 
}
    \label{fig:overview_schematics}
\end{figure*}

\section{Introduction and summary \label{Sec: intro}}
Quantum error correction (EC) theory and experiment are rapidly
advancing~\cite{bravyi2024high,zhou2025transversal,acharya2025below,
bluvstein2026fault,dasu2026computing}, and hardware roadmaps target
useful EC by the end of this
decade~\cite{ibm2026roadmap,quantinuum2025apollo}. However, various hardware limitations, and in particular limited numbers of physical qubits, are expected to restrict the performance of EC, leaving non-negligible logical errors that constrain the achievable volume and accuracy of quantum computations in the foreseeable future. 

Error mitigation (EM) protocols can extend
the reach of \textit{non}-error-corrected quantum processors
\cite{Temme2017QEM,Endo2018QEM,Cai2023QEM,Aharonov2025Importance}, as now demonstrated extensively
\cite{Kim2023Utility,aharonov2025reliablehighaccuracyerrormitigation,Abanin2025Constructive,Haghshenas2026DigitalMagnetism,Fischer2026QuantumChaos,Shinjo2026TimeCrystals,Alam2025Programmable,Alam2025Fermionic,Granet2025Pairing,Hartnett2026FermiHubbard,Leviatan2026PrethermalFloquet,Barron2026ObservableEstimation}. Motivated by this line of work,  \textit{logical} error mitigation (LEM) schemes were proposed 
\cite{Suzuki2022, Piveteau2021, Lostaglio2021, Xiong2020, Tsubouchi2024,Wahl2023DistanceZNE, aharonov2025syndromeawaremitigationlogical, Zhou2025ErrorMitigation,dincă2025errormitigationlogicalcircuits,tsubouchi2026quantumadvantagessyndromeawarenoisy,zhong2025combiningerrordetectionmitigation,yuan2026,kumar2026,Umbrarescu2026InfiniteDistance,Zhang2026LogicalQEM,Jeon2026PhysicalQEM, fischer2026spacetimemitigationlogicalerrors}, aiming to eliminate the effect of residual logical errors in error-corrected (or detected) logical circuits, in exchange for a shot overhead. The latter manifests as a quantum runtime overhead, which can be traded to reduce the space overhead (and intrinsic time overhead) of EC.

The early proposals for LEM protocols are \textit{external} (ExtLEM); they apply standard EM protocols to error-corrected logical gates, treating these as black boxes without exploiting their internal syndrome data for mitigation \cite{Suzuki2022, Piveteau2021, Lostaglio2021, Xiong2020, Tsubouchi2024}. Such ExtLEM protocols generally execute a set of modified EC circuits, whose final logical measurements are post-processed to estimate an ideal (error-free) logical expectation value. Fig.~\ref{fig:overview_schematics}(a) shows a prototypical example, where \textit{probabilistic error cancellation} (PEC) \cite{Temme2017QEM, Endo2018QEM} is applied externally to an EC circuit. In essence, PEC represents ideal logical layers using quasi-probability distributions (QP) over accessible modifications of faulty EC layers (see figure caption for details).   

Recently, Refs.~\cite{aharonov2025syndromeawaremitigationlogical, Zhou2025ErrorMitigation,dincă2025errormitigationlogicalcircuits, tsubouchi2026quantumadvantagessyndromeawarenoisy} showed that the syndrome data generated during EC can serve as a valuable resource for LEM; reusing it can reduce the exponential `blowup' rate with which the shot overhead grows with logical error rate and circuit volume. In particular, Refs.~\cite{aharonov2025syndromeawaremitigationlogical,  tsubouchi2026quantumadvantagessyndromeawarenoisy} studied \textit{syndrome-aware} LEM methods (SALEM) aimed at minimizing the shot overhead given syndrome data, finding that this can require \textit{adaptive} circuit modifications, extending the decoding and recovery process of EC. Fig.~\ref{fig:overview_schematics}(b) shows the shot-optimal PEC-based protocol of Ref.~\cite{aharonov2025syndromeawaremitigationlogical}, in which the circuit modifications of PEC become adaptive. The required real-time processing can be reduced by `coarse-graining' syndromes into a few classes, but generally at the cost of an increase in shot overhead. 

\begin{figure*}[t]
  \centering
  \includegraphics[width=\textwidth]{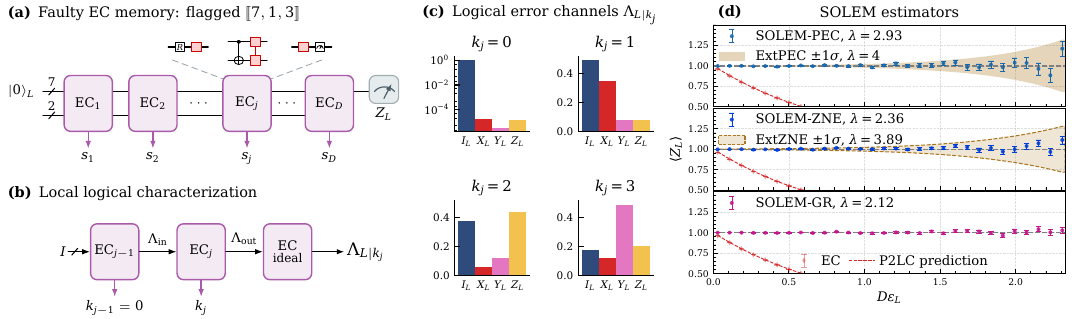} 
  \caption{ 
    \textbf{Circuit-level simulations of SOLEM.} 
    \textbf{(a)} We consider deep logical memory circuits based on the Steane code, \([\![7,1,3]\!]\). $D\gg1$ flagged EC cycles \cite{Chao_2018} are applied to the initial logical state \(\ket{0}_L\), followed by a measurement of 
    \(Z_L\). Each cycle \(\mathrm{EC}_j\) is subjected to a circuit-level noise model with physical error rate \(\epsilon=4\times10^{-4}\).
    Every shot produces a syndrome record \(\boldsymbol{s}\) and a final measurement  outcome \(o\in\{-1,1\}\). 
    \textbf{(b)} We use physical-to-logical characterization (P2LC) \cite{aharonov2025syndromeawaremitigationlogical} to map the given circuit-level noise model to \textit{local} logical error channels  \(\Lambda_{L|k_{j}}\), associated with individual EC cycles \(\mathrm{EC}_j\), and conditioned on a syndrome-class index $k_j\in\{0,1,2,3\}$. For simplicity, we suppress here the dependence of \(\Lambda_{L|k_j} \) on the past index $k_{j-1}$, which is fixed to $0$ (see Sec.~\ref{subsec:logical error ch} and \ref{sec:circ_leve_simul}).  
    \textbf{(c)} The four syndrome classes are 
    obtained 
    by separating a low-error subset ($k_j=0$) and clustering remaining syndromes according to the non-identity coefficients of their syndrome-conditioned logical channels. The resulting channels \(\Lambda_{L|k_j} \) are naturally occurring analogs of the Pauli-modified channels $I_L\Lambda_{L},X_L\Lambda_{L},Y_L\Lambda_{L},Z_L\Lambda_{L}$, used in external logical PEC and ZNE (ExtPEC, ExtZNE).   
    The syndrome-averaged logical error rate is \(\epsilon_L=1.12\times 10^{-4}\).
    \textbf{(d)} We implement the PEC,
    ZNE, and GR variants of \gls{mname} using the \gls{p2lc} channels and measured $o,\boldsymbol{s}$. The unmitigated decay of \(\E{Z_L}\) is accurately predicted by the \gls{p2lc} channels,
    and all three SOLEM estimators are statistically
    consistent with the ideal value of 1. 
    The legends report the exponential `blowup-rate'
    \(\lambda=(D\epsilon_L)^{-1}\log( \Gamma / \Gamma_{\rm {EC} })\) of the shot overhead \(\Gamma\), for the three \gls{mname} estimators, at the largest
    simulated depths \(D=2.3\epsilon_L^{-1}\).
    At the same shot count of \(N= 10^5\), the shaded region shows the predicted \(1\sigma\) intervals for ExtPEC and ExtZNE.
    The value \(\lambda=2.93\) for \gls{mname}-PEC demonstrates that, with the chosen partition, the naturally occurring channels and class probabilities $p_{k_j}$ enable a more efficient mitigation than the Pauli-modified channels of ExtLEM and their optimized probabilities, which yield $\lambda=4$. A similar conclusion  follows by comparing SOLEM-ZNE to ExtZNE, whose exponential heuristic holds exactly in this circuit. 
    Though the underlying heuristic of SOLEM-GR does not hold exactly in this circuit, 
    the GR estimator is statistically consistent with the ideal value, and attains exponentially lower sampling overhead than its ZNE and PEC counterparts, with $\lambda=2.12$. 
    }
  \label{fig:memory-simulation}
\end{figure*}

\begin{table*}[t]
\centering
\caption{
Leading-order sampling overheads of SOLEM variants and existing LEM
methods in the solvable model of Sec.~\ref{sec:model}.
The model has \(V\) statistically independent sites, with independent
syndrome labels \(k\in\{0,1\}\), \(p_1=\Pr(k=1)\), and parallel
syndrome-conditioned channels
\(\Lambda_{L|k}=(1-\epsilon_k)I+\epsilon_kE\).
The mean logical error probability is
\(\epsilon_L=(1-p_1)\epsilon_0+p_1\epsilon_1\).
We assume the exponential response of Prop.~\ref{prop:exp_Cr},
with error-dilution parameter \(\rho\), \(o\in\{-1,1\}\),
in the regime
\(0<\epsilon_0\ll\epsilon_1<1\) and \(p_1\ll1\).
Here \(p_{1|L}=p_1\epsilon_1/\epsilon_L\) and
\(\bar{\lambda}=2-\rho\epsilon_1p_{1|L}\).
The term \(\E{O}^2\) is to be subtracted from each listed expression.
}
\label{tab:overhead-summary}

\begingroup
\renewcommand{\arraystretch}{1.05}
\setlength{\tabcolsep}{3pt}

\newcommand{\formula}[1]{%
  \parbox[t]{0.292\textwidth}{%
    \centering
    \(\displaystyle #1\)\par}}

\newcommand{\tabnote}[1]{%
  \parbox[t]{0.292\textwidth}{%
    \centering\scriptsize #1\par}}

\newcommand{\methodcell}[1]{%
  \parbox[t]{0.085\textwidth}{%
    \raggedright #1\par}}

\newcommand{\headercell}[1]{%
  \parbox[c]{0.292\textwidth}{%
    \centering\textbf{#1}\par}}

\begin{tabular}{@{}llll@{}}
\hline\hline
& \headercell{PEC}
& \headercell{ZNE}
& \headercell{GR}
\tabularnewline
\hline
\noalign{\vskip 4pt}

\methodcell{\textbf{ExtLEM}\\[-2pt]
  {\scriptsize Ref.~\cite{filippov2024scalability}}}
& \formula{
    e^{4V\epsilon_L}
  }
& \formula{
    e^{2\rho V\epsilon_L}
    (1+3.6\rho V\epsilon_L)^2
  }
& \formula{\text{\textemdash}}
\tabularnewline[0pt]

&
& \tabnote{
    Amplification \(a=1+1.3/(\rho V\epsilon_L)\).
  }
& \tabnote{Not considered.}
\tabularnewline
\noalign{\vskip 4pt}
\hline
\noalign{\vskip 4pt}

\methodcell{\textbf{SALEM}\\[-2pt]
  {\scriptsize
    Ref.~\cite{aharonov2025syndromeawaremitigationlogical}}}
& \formula{
    e^{[4(1-p_{1|L})
       +(1-Q^{-2})\epsilon_1^{-1}p_{1|L}]V\epsilon_L}
  }
& \formula{
    e^{(p_{1|L}\epsilon_1^{-1}
       +2\rho\epsilon_0\epsilon_L^{-1})V\epsilon_L}
    (1+3.6\rho V\epsilon_0)^2
  }
& \formula{\text{\textemdash}}
\tabularnewline[0pt]

& \tabnote{
    Bit-flip: \(Q=|1-2\epsilon_1|^{-1}\).\par
    Depolarizing:
    \(Q=\frac{1+2\epsilon_1/3}{|1-4\epsilon_1/3|}\).
  }
& \tabnote{
    Amplification \(a=1+1.3/(\rho V\epsilon_0)\).\par
    Nonzero syndromes rejected.
  }
& \tabnote{Not considered.}
\tabularnewline
\noalign{\vskip 4pt}
\hline
\noalign{\vskip 4pt}

\methodcell{\textbf{\underline{SOLEM}}\\[-2pt]
  {\scriptsize Sec.~\ref{sec:model}}}
& \formula{
    e^{p_{1|L}^{-1}\epsilon_1^{-1}V\epsilon_L}
  }
& \parbox[t]{0.292\textwidth}{%
    \centering
    \(\displaystyle
      e^{\bar\lambda\rho V\epsilon_L}
      \left[
        1+\frac{V\epsilon_L}{p_{1|L}\epsilon_1}
        (\bar\lambda-1)^2
      \right]
    \)\par
    \vspace{2pt}
    {\scriptsize
      Reduction factors \(a_1,a_2\rightarrow
      a_c=a_{\rm EC}(\bar\lambda-1)\).
      \par
    }
  }
& \formula{
    \begin{aligned}[t]
      &e^{\bar\lambda\rho V\epsilon_L}
       \left[
         1+\frac{V\epsilon_L}{p_{1|L}\epsilon_1}
         (\bar\lambda-1)^2
       \right]
      \\[-2pt]
      &{}-\E{O}^2
       \left[
         e^{(2-\bar\lambda)\rho V\epsilon_L
            (2-\rho\epsilon_1)^2}F_0-1
       \right]
    \end{aligned}
  }
\tabularnewline
\noalign{\vskip 4pt}
\hline\hline
\end{tabular}
\endgroup
\end{table*}

\begin{figure*}[t]
    \centering
    \resizebox{\textwidth}{!}{\begingroup
\providecommand{\figtext}{\rmfamily}
\begin{tikzpicture}

\definecolor{panelamber}{HTML}{FFF8E8}
\definecolor{panelpurple}{HTML}{F8F0FA}
\definecolor{samplingamber}{HTML}{C78922}
\definecolor{qpamber}{HTML}{F5D89C}
\definecolor{syndromepurple}{HTML}{A95AA9}
\definecolor{channelblue}{HTML}{DCE9F7}
\definecolor{channelstroke}{HTML}{789DC5}
\definecolor{neutralfill}{HTML}{F1F4F5}
\definecolor{executionfill}{HTML}{F7F9F9}
\definecolor{outcomegray}{HTML}{67767D}
\definecolor{ink}{HTML}{000000}
\definecolor{muted}{HTML}{000000}

\tikzset{
  x=1cm,y=.82cm,font=\figtext,line cap=round,line join=round,
  >={Latex[length=1.75mm,width=1.2mm]},
  panel/.style={sharp corners,draw=none},
  phase/.style={font=\figtext\mdseries\fontsize{6.5}{7.1}\selectfont,
                text=muted,inner sep=1pt},
  paneltitle/.style={font=\figtext\bfseries\fontsize{8.7}{9.4}\selectfont,
                    text=ink,anchor=west,inner sep=0pt},
  message/.style={font=\figtext\mdseries\fontsize{6.05}{6.6}\selectfont,
                  text=ink,align=center},
  bottommessage/.style={font=\figtext\fontsize{6.0}{6.6}\selectfont,
                        text=ink,align=center,inner sep=.4pt},
  timinglabel/.style={font=\figtext\fontsize{4.75}{5.2}\selectfont,
                      text=muted,align=center,inner sep=.3pt},
  quantum/.style={rounded corners=1.25mm,draw=ink!45,fill=white,line width=.68pt,
                  minimum height=6.5mm,align=center,inner xsep=2.7pt,inner ysep=1.7pt,
                  font=\figtext\mdseries\fontsize{6.05}{6.6}\selectfont,text=ink},
  execution/.style={rounded corners=1.8mm,draw=ink!48,fill=executionfill,
                    line width=.78pt,minimum height=6.5mm,minimum width=48mm,
                    align=center,font=\figtext\mdseries\fontsize{6.05}{6.6}\selectfont,
                    text=ink},
  algorithm/.style={rounded corners=1.2mm,dashed,draw=ink!58,fill=neutralfill,
                    dash pattern=on 2.1pt off 1.45pt,line width=.68pt,
                    minimum height=6.5mm,align=center,inner xsep=2.5pt,inner ysep=1.6pt,
                    font=\figtext\mdseries\fontsize{6.05}{6.6}\selectfont,text=ink},
  p2lc/.style={algorithm,draw=channelstroke,fill=channelblue},
  fast/.style={algorithm,draw=syndromepurple,fill=panelpurple},
  strong/.style={algorithm,draw=syndromepurple,fill=panelpurple},
  estimator/.style={algorithm,draw=ink!62,fill=white,minimum height=6.5mm,
                    font=\figtext\mdseries\fontsize{6.05}{6.6}\selectfont},
  data/.style={rounded corners=.85mm,fill=white,line width=.68pt,
               minimum height=4.2mm,inner xsep=1.35pt,inner ysep=.45pt,
               font=\fontsize{5.45}{5.95}\selectfont,text=ink},
  sdata/.style={data,draw=syndromepurple},
  xdata/.style={data,draw=samplingamber},
  odata/.style={data,draw=outcomegray},
  ldata/.style={data,draw=channelstroke},
  flow/.style={draw=ink!48,line width=.70pt,-{Latex}},
  sflow/.style={draw=syndromepurple,line width=.84pt,-{Latex}},
  charflow/.style={sflow,dash dot},
  xflow/.style={draw=samplingamber,line width=.84pt,-{Latex}},
  chflow/.style={draw=channelstroke,line width=.80pt,-{Latex}},
  oflow/.style={draw=outcomegray,line width=.74pt,-{Latex}},
  softlabel/.style={font=\figtext\mdseries\fontsize{5.45}{6.0}\selectfont,
                    text=muted,align=center},
  linelabel/.style={font=\fontsize{5.45}{5.95}\selectfont,
                    inner sep=.4pt,align=center}
}

\path[use as bounding box] (-.02,-0.4800) rectangle (17.77,9.1800);

\node[phase] at (4.55,8.9200) {Before execution};
\node[phase] at (11.55,8.9200) {During execution};
\node[phase] at (15.875,8.9200) {Post-process};
\draw[draw=ink!13,line width=.55pt] (.30,8.7000)--(8.80,8.7000);
\draw[draw=ink!13,line width=.55pt] (9.40,8.7000)--(13.70,8.7000);
\draw[draw=ink!13,line width=.55pt] (14.30,8.7000)--(17.45,8.7000);

\path[panel,fill=panelamber] (0,6.2800) rectangle (17.75,8.5200);
\path[panel,top color=panelamber,bottom color=panelpurple]
  (0,3.3200) rectangle (17.75,5.9800);
\path[panel,fill=panelpurple] (0,0.1200) rectangle (17.75,3.0200);
\foreach \x in {9.10,14.00}{
  \draw[draw=ink!32,dash pattern=on 2.4pt off 1.6pt,line width=.72pt]
    (\x,3.3200)--(\x,8.4200);
}
\draw[draw=ink!32,dash pattern=on 2.4pt off 1.6pt,line width=.72pt]
  (9.10,3.3200)--(3.25,3.0200)--(3.25,0.2500);
\draw[draw=ink!32,dash pattern=on 2.4pt off 1.6pt,line width=.72pt]
  (14.00,3.3200)--(9.05,3.0200)--(9.05,0.2500);

\node[bottommessage] (a-caption) at (4.55,6.5000)
  {Characterization before execution};

\node[quantum,minimum width=3.4cm,minimum height=6.5mm]
  (a-char) at (2.30,7.3200) {Characterization circuits};
\node[p2lc,minimum width=3.0cm,minimum height=6.5mm]
  (a-model) at (6.45,7.3200) {Logical error model};
\draw[flow] (a-char.east)--(a-model.west);

\node[execution,minimum width=46mm,draw=samplingamber,line width=.90pt]
  (a-run) at (11.20,7.3200) {Modified EC circuits};
\draw[chflow] (a-model.east)--(a-run.west);
\node[linelabel,text=channelstroke] at (8.425,7.4700) {$\Lambda_L$};

\node[estimator,minimum width=2.70cm,minimum height=6.5mm]
  (a-est) at (15.80,7.3200) {Fixed estimator};
\coordinate (a-title-bottom) at ($(a-char.north)+(a-char.south)-(a-caption.north)$);
\node[paneltitle,anchor=south west] at (.12,0 |- a-title-bottom) {(a) ExtLEM};
\draw[oflow] (a-run.east |- 11.75,7.5400)--(a-est.west |- 13.20,7.5400);
\node[linelabel,text=outcomegray] at (13.72,7.6800) {$o$};
\draw[xflow] (a-run.east |- 11.75,7.1000)--(a-est.west |- 13.05,7.1000);
\node[linelabel,text=samplingamber] at (13.72,7.2400) {$\sigma$};

\node[paneltitle] at (.12,5.6800) {(b) SALEM};
\node[bottommessage] at (4.55,3.5400)
  {Characterization before execution};
\node[bottommessage] at (11.55,3.5400) {Classification during execution};

\node[quantum,minimum width=3.4cm,minimum height=6.5mm]
  (b-char) at (2.30,4.3600) {Characterization circuits};
\node[p2lc,minimum width=3.0cm,minimum height=6.5mm]
  (b-model) at (6.45,4.3600) {Logical error model};
\draw[flow] (b-char.east)--(b-model.west);

\path[shade,left color=syndromepurple,right color=samplingamber,
      rounded corners=1.8mm] (8.87,3.9393) rectangle (13.53,4.7807);
\node[execution,minimum width=46mm,minimum height=6.5mm,draw=none,fill=executionfill]
  (b-run) at (11.20,4.3600)
  {Adaptively modified EC circuits};
\draw[chflow] (b-model.east)--(b-run.west);
\node[linelabel,text=channelstroke] at (8.425,4.5100) {$\Lambda_{L|k}$};

\node[fast,minimum width=3.0cm,minimum height=6.5mm] (b-fast) at (11.20,5.3000)
  {Fast classifier};
\node[estimator,minimum width=2.70cm,minimum height=6.5mm]
  (b-est) at (15.80,4.3600) {Fixed estimator};

\coordinate (b-selbow) at (9.196,5.3000);
\draw[sflow] (b-run.north -| b-selbow)--(b-selbow)--(b-fast.west);
\node[linelabel,text=syndromepurple] at (9.448,5.4800) {$s$};
\coordinate (b-ksplit) at (13.204,5.3000);
\draw[sflow] (b-fast.east)--(b-ksplit)--(b-run.north -| b-ksplit);
\node[linelabel,text=syndromepurple] at (12.952,5.4800) {$k$};
\draw[sflow] (b-ksplit)--(15.80,5.3000)--(b-est.north);

\draw[oflow] (b-run.east |- 11.70,4.5800)--(b-est.west |- 13.20,4.5800);
\node[linelabel,text=outcomegray] at (13.72,4.7200) {$o$};
\draw[xflow] (b-run.east |- 11.70,4.1400)--(b-est.west |- 13.05,4.1400);
\node[linelabel,text=samplingamber] at (13.72,4.2800) {$\sigma$};

\node[paneltitle] at (.12,2.7200) {(c) \underline{SOLEM}};
\node[bottommessage] at (6.15,0.3400)
  {Characterization circuits during execution};
\node[bottommessage] at (13.53,0.3400)
  {Classification and error model in post-processing};

\node[execution,minimum width=46mm,draw=syndromepurple,line width=.90pt]
  (c-run) at (6.15,1.8500) {Unmodified EC circuit};

\node[quantum,fill=neutralfill,minimum width=46mm,minimum height=4.0mm]
  (c-pchar) at (6.15,1.0634) {Characterization circuits};

\node[strong,minimum width=3.0cm,minimum height=6.5mm] (c-strong) at (11.20,2.3900)
  {Strong classifier};
\node[p2lc,minimum width=3.0cm,minimum height=6.5mm] (c-model) at (11.20,1.1600)
  {Logical error model};
\node[estimator,minimum width=2.70cm,minimum height=6.5mm] (c-est) at (15.80,1.4300)
  {Configurable estimator};

\coordinate (c-split) at (9.35,2.3900);
\draw[draw=syndromepurple,line width=.84pt]
  (c-run.north)--(6.15,2.3900)--(c-split);
\node[linelabel,text=syndromepurple] at (8.20,2.5300) {$s$};
\draw[sflow] (c-split)--(c-strong.west);
\draw[charflow] (c-split)--(9.35,1.3700)--(c-model.west |- 9.35,1.3700);
\draw[oflow] (c-run.east |- 8.60,1.7300)--(c-est.west |- 15.15,1.7300);
\node[linelabel,text=outcomegray] at (13.72,1.8700) {$o$};
\draw[flow] (c-pchar.east)--(c-model.west |- c-pchar.east);
\draw[sflow] (c-strong.east)--(15.80,2.3900)--(c-est.north);
\node[linelabel,text=syndromepurple] at (13.72,2.5300) {$k$};
\draw[chflow] (c-model.east)--(c-est.west |- c-model.east);
\node[linelabel,text=channelstroke] at (13.72,1.3500) {$\Lambda_{L|k}$};

\draw[draw=ink!13,line width=.55pt] (.30,-0.0800)--(2.95,-0.0800);
\draw[draw=ink!13,line width=.55pt] (3.55,-0.0800)--(8.75,-0.0800);
\draw[draw=ink!13,line width=.55pt] (9.35,-0.0800)--(17.45,-0.0800);
\node[phase] at (1.625,-0.3100) {Before execution};
\node[phase] at (6.15,-0.3100) {During execution};
\node[phase] at (13.53,-0.3100) {Post-process};

\end{tikzpicture}
\endgroup}%
    \caption{\textbf{Complete SOLEM workflow, and comparison to ExtLEM and SALEM.} Each panel is divided into a central step in which mitigation circuits are prepared and executed, with large boxes abstracting the detailed description at the top of Fig.~\ref{fig:overview_schematics}, as well as additional classical logic (dashed boxes) and quantum logic (full boxes) occurring before execution and in post-processing. The latter includes a final estimation step, abstracting the bottom of Fig.~\ref{fig:overview_schematics}. Stage widths are adjusted for readability. 
\textbf{(a)} ExtLEM obtains its characterization \textit{before} execution and uses it to prepare modified EC circuits, exposing the resulting estimator to biases due to changes in physical errors (drifts) in the time span between characterization and mitigation circuits. Observable outcomes \(\boldsymbol{o}\) and modification records \(\boldsymbol{\sigma}\) are used for estimation.
\textbf{(b)} SALEM likewise characterizes the relevant logical errors before execution. During execution, a `fast' classifier, based on soft-output from the decoder used for EC, maps each measured syndrome \(s\) to a class \(k\), which controls adaptive circuit modifications. Usage of a fast and therefore `weak' classifier limits the improvement in shot-overhead relative to ExtLEM. The classes \(\boldsymbol{k}\), logical outcomes \(\boldsymbol{o}\), and modification labels \(\boldsymbol{\sigma}\) are used for estimation. In both panels, logical error models are implicitly used in estimation (via e.g., usage of QP signs and normalization).
\textbf{(c)} SOLEM executes the unmodified EC circuit and retains its syndrome and observable records. Characterization circuits can be interleaved with execution, significantly reducing sensitivity to hardware drifts. Moreover,  dedicated characterization circuits may be redundant, as it may be possible to obtain a logical error model by processing the syndrome distribution \cite{wagner2023learninglogicalpaulinoise,xiao2026insitubenchmarkingfaulttolerantquantum,zheng2026efficientlearninglogicalnoise}. 
In both cases, the logical error model is obtained in post-processing. Classification is also performed in post-processing, allowing a `stronger' classifier, based on a more accurate but slower soft-output decoder, which cannot be used in real time. 
The resulting logical error model, class records and observable records determine the SOLEM estimator. Classification choices, as well as additional estimation choices,  can thus be configured freely, to optimize empirical error bars and additional performance metrics.  
    }
    \label{fig:workflows}
\end{figure*}

\textbf{Syndrome-based offline mitigation.} In this work we ask whether syndrome data can not only serve as \textit{a} resource to reduce the LEM shot overhead, but as \textit{the} resource on which LEM is based. Our basic insight is that syndrome-conditioned expectation values correspond to the ideal logical circuit in the presence of \textit{distinct} logical error channels, which may therefore replace the circuit modifications used in standard EM, and inherited by ExtLEM and SALEM. This raises the central technical question addressed in this work:

\textit{When, and how efficiently, can empirical syndrome-conditioned expectation values of a given faulty EC circuit be used to estimate ideal logical expectation values?}

We address this question constructively, by introducing \emph{syndrome-based offline} LEM (SOLEM): a framework that 
estimates ideal logical expectation values by post-processing 
syndrome records and logical measurement outcomes, together with a characterization of syndrome-conditioned logical errors.
SOLEM entails no added real-time classical processing or adaptivity, and, moreover, executes \textit{only} the given EC circuit, without \emph{any} modification. In this sense, SOLEM is simpler to implement than both SALEM and ExtLEM, and may be viewed as an inherent feature of EC circuits.

Within the SOLEM framework, we develop three specific constructions, offering different performance tradeoffs. First, Fig.~\ref{fig:overview_schematics}(c)  illustrates how quasi-probability (QP) representations using the naturally occurring logical channels can (in principle) reproduce the identity channel, yielding an offline logical version of PEC. Exact cancellation, leading to unbiased estimators, is possible when the identity lies in the linear span of syndrome-conditioned channels, while the shot overhead depends on both these channels and the distribution of syndromes. As in standard PEC, approximate representations introduce a controlled bias-overhead trade-off, allowing a lower shot overhead in exchange for residual errors that bound the estimation bias. 
Second, we use QP representations to \textit{partially} reduce logical errors (and estimate derivatives with respect to logical channels), leading to a heuristic, but often more efficient, offline logical version of \textit{zero-noise extrapolation} (ZNE) \cite{Temme2017QEM,Endo2018QEM}. 
Third, we develop \textit{global regression} (GR), a heuristic construction that assigns each syndrome record a parametric
``global decay'' factor, and fits the syndrome-conditioned means of observed outcomes to the products of these factors and the unknown ideal expectation value.

\textbf{Results.}
Our constructions use a characterization of syndrome-conditioned logical
error channels associated with individual logical layers. We obtain this characterization using the
physical-to-logical characterization (P2LC) framework \cite{aharonov2025syndromeawaremitigationlogical}, which maps a
circuit-level physical noise model to effective logical channels while
accounting for temporal correlations between layers.
We review the required formalism in Sec.~\ref{sec:setup}, 
and give a general description of the SOLEM framework in Sec.~\ref{sec:solem_family}.
We then present the 
QP-based SOLEM constructions in Sec.~\ref{sec:solem-qp-based} and
introduce SOLEM-GR in Sec.~\ref{sec:solem-gr}.

We first assess the bias and sampling overhead of SOLEM-PEC, SOLEM-ZNE,
and SOLEM-GR in circuit-level simulations of the Steane-code logical
memory shown in Fig.~\ref{fig:memory-simulation}(a); see also
Sec.~\ref{sec:circ_leve_simul}. We use P2LC to map the circuit-level
noise model to local syndrome-conditioned logical error channels (Fig.~\ref{fig:memory-simulation}(b) and (c)).
The characterized channels, syndrome records, and logical measurement
outcomes are then used to construct the three SOLEM estimators entirely
in post-processing. As shown in Fig.~\ref{fig:memory-simulation}(d), we
resolve no statistically significant bias for any of the three
estimators across the simulated circuit depths. SOLEM-PEC achieves a
lower sampling overhead than external PEC, while SOLEM-ZNE and SOLEM-GR
further reduce the overhead in this setting, illustrating the potential
gains enabled by their additional modeling assumptions.
Fig.~\ref{fig:solem_pec_panels}(b) shows the effect of finer syndrome partitions, and the importance of taking input errors into account.

To obtain qualitative insight into these overheads and compare SOLEM with existing LEM methods, we introduce an analytically solvable effective logical-noise model in Sec.~\ref{sec:model}.
The model allows an arbitrary circuit and
observable, but assumes independent binary syndrome labels and
site-independent, ``parallel'' syndrome-conditioned logical channels \(\Lambda_{L|k} = (1-\epsilon_k)I + \epsilon_k E\) 
with \(k\in \{0,1\}\), and error map $E$.

Within this model, the expectation conditioned on \(l\) nontrivial
syndromes decomposes as
\(\mathbb E[o|N_1=l]=\sum_r K_{r,l}C_r\), separating the
error-count statistics \(K_{r,l}\) from the ideal circuit's response
\(C_r\) to \(r\) inserted errors
(Prop.~\ref{prop:ev-cond-N1}). 
When the latter is exponentially decaying, the heuristic models used by SOLEM-ZNE
and SOLEM-GR are exact
(Prop.~\ref{prop:exp_Cr}).
SOLEM-PEC is unbiased within the parallel-channel model,
while SOLEM-ZNE and SOLEM-GR are asymptotically unbiased
under this additional condition.
Their leading-order sampling overheads are summarized
in Table~\ref{tab:overhead-summary}.

Figures~\ref{fig:pec-lam} and \ref{fig:solem-zne-overheads} illustrate
the resulting overhead expressions for representative parameter
choices. In the regimes shown, SOLEM-PEC has an overhead comparable to,
and in some cases lower than, SALEM-PEC. 
When the observable significantly decays due to noise, SOLEM-GR and SOLEM-ZNE have nearly equal
overheads and outperform the other methods included in the comparison.

For the conditional Bernoulli model associated with the SOLEM-GR
response, we additionally derive the Fisher-information lower bound for asymptotically unbiased estimators based solely on the
coarsened observations \(\{(|\bsk_i|,o_i)\}_{i=1}^{N}\), where \(|\bsk_i|\) is the number of 1 syndromes in shot \(i\) (Prop.~\ref{prop:gr-fisher}). 
Under standard assumptions, the maximum-likelihood estimator
asymptotically attains the exact bound. We further show that unweighted
least squares agrees with this bound through quadratic order in the
conditional mean.

Finally, we investigate the bias arising from the heuristic response
models used by SOLEM-GR and SOLEM-ZNE in effective logical-channel
simulations of non-Clifford Floquet circuits
(Sec.~\ref{Sec: effective logical simulation}). Across the circuits
considered, fine-grained SOLEM-GR has a lower sampling overhead than
SOLEM-ZNE, while the 
deviations of the two
estimators from the ideal value are strongly correlated
correlated (Fig.~\ref{fig:flouqet_sim}). Consequently, these simulations do not resolve a bias
specific to the SOLEM-GR heuristic compared to the commonly used ZNE heuristic. 

To examine that heuristic more directly, we also construct a binary
parallel-channel partition and compute the circuit-response coefficients
\(C_r\). For most of the tested circuits, \(C_r\) is accurately
approximated by the exponential form assumed in
Prop.~\ref{prop:exp_Cr}. 
For the selected circuit depths shown in
Fig.~\ref{fig:cr_in_floquet}, statistically significant deviations of
the SOLEM-GR and SOLEM-ZNE estimates from the ideal value coincide with systematic
departures of \(C_r\) from an exponential response. 
This provides a
direct numerical test of the principal modeling assumption underlying
SOLEM-GR and of ZNE and identifies its breakdown as a source of estimator bias.

\textbf{Advantages of offline mitigation.} We discuss here three advantages due to SOLEM's offline nature, related to error characterization, syndrome classification, and estimator configuration.

A first advantage of SOLEM over ExtLEM and SALEM is related to the characterization of logical errors, which is needed, in some form, for most proposed LEM methods \cite{Suzuki2022, Piveteau2021, Lostaglio2021, Xiong2020, Tsubouchi2024, aharonov2025syndromeawaremitigationlogical, Zhou2025ErrorMitigation,dincă2025errormitigationlogicalcircuits, tsubouchi2026quantumadvantagessyndromeawarenoisy,yuan2026,kumar2026}. This characterization generally involves running \textit{characterization circuits} which are sensitive to hardware errors, and classically processing the resulting measurement outcomes to obtain a logical error model. In ExtLEM and SALEM, this model is required to determine the corresponding \textit{mitigation circuits}, and must therefore be obtained \textit{before} these are executed, see Fig.~\ref{fig:workflows}(a)-(b). This workflow is therefore sensitive to \textit{drifts} - temporal changes in hardware that often make learned error models inaccurate by the time mitigation circuits are executed, requiring careful treatment \cite{Kim2023Utility,aharonov2025reliablehighaccuracyerrormitigation,Leviatan2026PrethermalFloquet}. In contrast, SOLEM requires an error model only in post-processing, allowing characterization circuits to be temporally-interleaved with mitigation circuits, significantly reducing sensitivity to drifts, see Fig.~\ref{fig:workflows}(c).

A particularly appealing workflow can arise by combining SOLEM with syndrome-based characterization \cite{wagner2023learninglogicalpaulinoise,xiao2026insitubenchmarkingfaulttolerantquantum,zheng2026efficientlearninglogicalnoise}. These  works found that unmodified EC circuits can serve as characterization circuits for both physical and logical error models, obtained by processing the naturally occurring distribution of syndromes. Thus, a complete logical characterization \textit{and} mitigation workflow is in principle built into EC circuits, requiring only classical post-processing of syndromes and logical measurement outcomes. 

Another advantage of SOLEM's offline nature is related to the \textit{classification} of syndromes. As mentioned above, SALEM's classical processing can be eased by coarse-graining syndromes into a few carefully constructed classes, reducing much of the real-time task to a mapping of syndromes to classes \cite{aharonov2025syndromeawaremitigationlogical}, see Fig.~\ref{fig:workflows}(b). Useful and efficient classifiers can be obtained by thresholding on \textit{soft-output} from the decoder already used for EC \cite{aharonov2025syndromeawaremitigationlogical, Zhou2025ErrorMitigation,dincă2025errormitigationlogicalcircuits}, which is a measure of the decoder's confidence in its chosen recovery operation \cite{gidney2023yokedsurfacecodes,
smith2024mitigating,
meister2024efficientsoftoutputdecoderssurface,
kishi2026softoutput,
aydin2026softdecoding,dentelski2026neuralconfidence}. 

Within SOLEM, classification can still be useful, to reduce all (subsequent) post-processing from the large set of syndromes to a small number of classes. Since this classification isn't needed during execution, it need not be based on the `fast but weak' decoder used for EC, which will typically have sub-optimal accuracy due to the stringent runtime constraints of real-time processing. Instead, it is natural to deploy a `strong but slow' decoder in post-processing (Fig.~\ref{fig:workflows}(c)), and classify syndromes based on its more informative soft-output, leading to a reduced shot overhead. 

A concrete example of such a two-decoder workflow was studied in
Ref.~\cite{aharonov2025syndromeawaremitigationlogical}, in the special
case of binary SALEM with a rejected subset, which can be implemented
non-adaptively~\cite{aharonov2025syndromeawaremitigationlogical,
Zhou2025ErrorMitigation}. There, a fast perfect-matching decoder \cite{higgott2021pymatching} was used for EC, while a stronger
tensor-network decoder \cite{piveteau2023tensornetworkdecoding}
was used for classification in post-processing. Within SOLEM, such a workflow is \textit{generally} applicable, without restricting to binary classifications or sub-optimal mitigation choices. %

A third advantage of SOLEM's offline nature is related to the choice of estimator. Within ExtLEM and SALEM, the choice of estimator affects the choice of mitigation circuits, and is therefore commonly fixed in pre-processing; changing it in post-processing
generally incurs an additional sampling overhead. In contrast, multiple SOLEM estimators can be freely generated from the same measured data, allowing for an optimization of the estimator over empirical error bars, as well as additional performance metrics. As a concrete example, when performing external ZNE, there is an optimal amplification factor which minimizes the resulting shot overhead, but it depends on an a priori unknown ratio between ideal and noisy expectation values \cite{filippov2024scalability, Haghshenas2026DigitalMagnetism,Leviatan2026PrethermalFloquet}. Within SOLEM-ZNE, we find a related optimal amplification (Sec.~\ref{sec:model-overheads}). It too depends on a priori unknown data, but within SOLEM-ZNE this data is only needed a posteriori, where it can be estimated. 

 \textbf{Related work.} 
The closest literature to SOLEM-PEC (and SOLEM-ZNE) can be found in Ref.~\cite{aharonov2025reliablehighaccuracyerrormitigation}, where `retroactive resampling' is presented as a means to overcome drifts in physical QP-based EM. In this method, an initial noise model is used to generate a QP distribution from which mitigation circuits are sampled. An updated noise model is then used to `resample' the resulting data in post-processing, such that it corresponds to an updated QP distribution. This eliminates biases due to drifts, at the cost of an increased shot overhead. Note, however, that (i) in SOLEM-PEC there is no need for an initial noise model and a corresponding ensemble of modified circuits, and (ii) SOLEM-PEC can significantly improve the shot overhead relative to ExtPEC, rather than always increasing it.

Reference~\cite{dincă2025errormitigationlogicalcircuits} proposed a maximum-likelihood method which is close to SOLEM-GR, sharing the underlying principle of mapping syndrome-related data to expectation-value decay constants. However, Ref.~\cite{dincă2025errormitigationlogicalcircuits} relies on the ability to exactly map local logical errors to their effect on the observable in memory circuits, and does not explain how to generalize the approach to general circuits where such a mapping may be computationally intractable. In contrast, SOLEM-GR is based on a heuristic model that bypasses this computational barrier. Though no single model is valid for all circuits, the one we focused on performs well in the non-Clifford circuits studied in Sec.~\ref{Sec: effective logical simulation}. Even in the deep circuit-level memory simulations of Fig.~\ref{fig:memory-simulation}, we do not rely on the ability to exactly map errors to decays; instead we utilize our heuristic model, which does not hold exactly but still performs very well. 

Additionally,  Ref.~\cite{dincă2025errormitigationlogicalcircuits} relies on soft-output from the `fast but weak' decoder used for EC, for both classification and characterization. However, since both procedures are only required
in post-processing, we advocate using stronger
soft-output decoders (such as tensor-network decoders)
for classification and P2LC for more accurate
characterization, as shown in Fig.~\ref{fig:workflows}.

In this paper we constructively studied the opportunities and limitations of LEM utilizing syndrome data entirely in post-processing. A complementary approach was taken in  Ref.~\cite{tsubouchi2026quantumadvantagessyndromeawarenoisy}, which compared the classical and quantum Fisher information to identify a separation between the optimal shot overhead of offline and adaptive syndrome-based LEM. However, we find it hard to directly relate the circuit-level setup considered here with the simplified setting of Ref.~\cite{tsubouchi2026quantumadvantagessyndromeawarenoisy}, which considered a single round of EC with code-capacity errors, followed by ideal logical measurements. As an example, the optimal `offline' method found in Ref.~\cite{tsubouchi2026quantumadvantagessyndromeawarenoisy} corresponds to the `fine-grained' SALEM-PEC of Ref.~\cite{aharonov2025syndromeawaremitigationlogical}, which in general logical circuits is highly adaptive. It would be interesting to extend the information-theoretic approach to the circuit-level setting, and understand whether the methods proposed here make optimal use of syndrome data in post-process. In this work we only addressed this optimality question in the context of SOLEM-GR, under its assumed model.

\textbf{Future directions.} The ability to mitigate logical errors using the syndrome data of unmodified EC circuits opens several directions beyond the constructions developed here.

A central question is how to use syndrome information optimally for mitigation: what are the limits of purely offline estimation, and how much further can adaptive quantum control improve performance? SOLEM's advantage over some SALEM constructions suggests that adaptive protocols could benefit from more effective syndrome-resolved post-processing. Conversely, in our comparisons, external ZNE outperforms the studied syndrome-based methods when the observable is very weakly sensitive to noise, a regime motivated by the error-dilution phenomenon studied in Ref.~\cite{Granet_2025}. These comparisons motivate a better understanding of how much improvement syndrome information can provide, both with and without adaptive control. As discussed above, extending the information-theoretic approach of Ref.~\cite{tsubouchi2026quantumadvantagessyndromeawarenoisy} to more general fault-tolerant computations could help determine how close existing methods come to the best achievable performance, and when adaptivity provides an additional advantage.

Our analysis establishes the accuracy and sampling advantages of SOLEM-ZNE and SOLEM-GR in the solvable model, while circuit-level memory and effective logical non-Clifford simulations support their applicability beyond that setting. A natural next step is to characterize their bias--variance trade-offs for broader circuit, code, and noise families. Beyond the usual extrapolation assumption, SOLEM-GR models the response using a
a parametric global decay factor
of the syndrome record, motivating a study of how this additional assumption affects extrapolation accuracy beyond the settings considered here. For both methods, can departures from the assumed exponential response be identified in the measured data and related quantitatively to extrapolation bias?

For SOLEM-ZNE, natural directions include multi-point extrapolation and derivative-based estimators constructed from the same experimental records. The number and placement of noise levels, the derivative orders, and the extrapolation model can be optimized jointly, accounting for the covariance between the resulting estimates. It remains to determine when these constructions improve the bias--variance trade-off beyond the two-point method, and its point-and-derivatives limit, analyzed here.

The QP constructions demonstrated here motivate studying their scalability to codes encoding many logical qubits per block, including quantum LDPC codes~\cite{bravyi2024high}. In these settings, the linear constraints defining SOLEM-PEC and SOLEM-ZNE involve a larger logical channel space. Under what conditions do they admit accurate solutions with low sampling overhead, and can such solutions be found efficiently by exploiting the code structure and an appropriate syndrome classification? These questions also motivate the co-design of error correction and mitigation~\cite{kumar2026}. Can a code or decoder with a higher logical error rate nevertheless enable more efficient observable estimation by providing more informative syndrome data?

A particularly appealing prospect is to obtain both characterization and mitigation from the unmodified error-corrected computation itself: under what conditions are its syndrome and measurement records sufficient for both tasks, and when can they be performed efficiently? Our circuit-level demonstration starts from a given physical noise model and uses physical-to-logical characterization to construct SOLEM estimators. A natural next step is to determine when the required characterization can itself be learned efficiently from the computation's syndrome and measurement records, building on syndrome-based noise-learning methods~\cite{wagner2023learninglogicalpaulinoise,xiao2026insitubenchmarkingfaulttolerantquantum,zheng2026efficientlearninglogicalnoise}. Equally important is the precision required: how do errors in syndrome-conditioned channels and probabilities propagate into the mitigated estimate, and how does this sensitivity depend on the code, syndrome classification, and estimator? Answering these questions would establish the characterization cost required to realize the predicted gains, including the effects of model approximations and truncated syndrome histories.

Finally, SOLEM's compatibility with unmodified error-corrected circuits offers a direct route to experimental demonstrations of logical error mitigation. An important challenge is to realize its predicted gains in a complete workflow that combines characterization, syndrome processing, and mitigation, while accommodating non-Markovian noise, hardware drift, and characterization inaccuracies at a tractable classical cost.

\vspace{10pt}

\textbf{Acknowledgments.} We thank  our colleagues at Qedma for helpful discussions and  feedback during this project. 

\vspace{10pt}

\textbf{Competing interests.}  This work describes methods that are the subject of patent applications filed by Qedma Quantum Computing Ltd. ~\cite{qedma2025patent}.

\section{Preliminaries}\label{sec:setup}

\subsection{Error-corrected circuits and syndrome conditioning}

We follow the formalism of
Ref.~\cite[Appendices 1--2]{aharonov2025syndromeawaremitigationlogical}.
Consider a faulty error-corrected circuit $C=G_D\cdots G_1$.
Each layer $G_j$ includes gadgets implementing logical gates, syndrome measurement, decoding and recovery. Every physical operation in $G_j$ carries a physical error (or `fault') channel. To describe syndrome-conditioned quantities, we write $G_j$ as a `quantum instrument',
\begin{equation}
    G_j=\sum_{s_j\in S^{(j)}}G_{j,s_j},
\end{equation}
where the completely positive (CP) and trace-nonincreasing map $G_{j,s_j}$ corresponds to syndrome outcome $s_j$.
The sum $G_j$ is trace-preserving (TP), and we denote its intended logical operation by $g_j$.

We use the `super-bra-ket' notation,
with $\kket{\rho}$ denoting a density operator
and $\langle\!\langle O |\rho\rangle\!\rangle=\operatorname{Tr}(O\rho)$.
Let $\kket{G_\psi}$ be the state produced by
faulty error-corrected preparation
of the intended initial logical state $\kket{\psi}$.
Similarly, $\bbra{G_O}$ is the linear functional
giving the mean outcome of the faulty error-corrected
measurement of a logical observable $O$.
The corresponding ideal expectation value is
\begin{equation}
    \langle O\rangle
    =\bbra{O}g_D\cdots g_1\kket{\psi}.
\end{equation}
We assume $\|O\|_\infty\leq1$,
so measurement outcomes satisfy $o\in[-1,1]$.
For $O^2=I$, $o\in\{-1,1\}$.

Each circuit execution (`shot') produces a syndrome record
$\boldsymbol{s}=(s_1,\ldots,s_D)$ and a logical outcome $o$.
The joint syndrome distribution and conditional mean are
\begin{align}
    p_{\boldsymbol{s}}
    &=\bbra{I}G_{D,s_D}\cdots G_{1,s_1}\kket{G_\psi},
    \label{eq:ps}\\
    \mathbb{E}[o|\boldsymbol{s}]
    &=\bbra{G_O}G_{D,s_D}\cdots G_{1,s_1}\kket{G_\psi}/
    p_{\boldsymbol{s}},
    \label{eq:physical-conditional}
\end{align}
where conditioning is restricted to records of nonzero probability.
Our goal is to obtain a purely logical description
of these quantities, in terms of logical error channels associated with \textit{individual} layers. This is non-trivial because logical errors are generally due to faults occurring in multiple layers. 

\subsection{Exact predictive logical error channels}
\label{subsec:logical error ch}

The predictive logical error channels $\Lambda^{(j)}_{L\mid s_{1:j}}$ constructed in
Ref.~\cite[Appendix 2]{aharonov2025syndromeawaremitigationlogical}
reproduce the faulty expectation value, 
\begin{equation}
    \label{eq:predictive-s}
    \mathbb{E}[o|\boldsymbol{s}]
    =\bbra{O}\Lambda^{(D)}_{L\mid s_{1:D}}g_D
    \cdots\Lambda^{(1)}_{L\mid s_1}g_1\kket{\psi},
\end{equation}
where $s_{1:j}=(s_1,\ldots,s_j)$.
The corresponding conditional probabilities $p^{(j)}_{s_j\mid s_{1:j-1}}$ reproduce the faulty syndrome distribution, 
\begin{align}
    p_{\boldsymbol{s}}
    &=\prod_{j=1}^{D}p^{(j)}_{s_j\mid s_{1:j-1}}
    \label{eq:chain-s}.
\end{align}

The iterative construction of these objects is depicted in Fig.~\ref{fig:solem-iterative-construction}, for the simpler syndrome-averaged case. In the syndrome-conditioned case, a given input-error channel $\Lambda^{(j)}_{\mathrm{in}\mid s_{1:j-1}}$ is mapped by $G_j$ to an output error channel $\Lambda^{(j)}_{\mathrm{out}\mid s_{1:j}}$ (and a corresponding conditioned syndrome distribution $p^{(j)}_{s_j\mid s_{1:j-1}}$). The output is then factored as the product of a logical error channel and an input to the next layer,
\begin{equation}
    \Lambda^{(j)}_{\mathrm{out}\mid s_{1:j}}
    =\Lambda^{(j+1)}_{\mathrm{in}\mid s_{1:j}}
    \Lambda^{(j)}_{L\mid s_{1:j}}.\label{Eq: out=in*L}
\end{equation}
The logical channel is identified by acting on the output with the \textit{fault-free} version of the next layer, isolating the piece which is not corrected. These ideal layers are reference operations;
no component in the executed circuit is assumed to be fault-free. Equation~\eqref{Eq: out=in*L} then defines the next input, which is \textit{correctable}, in the sense that it will be corrected by the next layer, if no faults occur within it. Thus, each logical channel $\Lambda^{(j)}_{L\mid s_{1:j}}$ captures logical errors that appear at the output of $G_j$, but not earlier. Preparation and measurement errors are included
through the first input-error map,
$\Lambda_{\mathrm{in}}^{(1)}\kket{\psi}=\kket{G_\psi}$,
and the final logical error map, respectively.

Note that the definition of the input maps can make them non-CP, which passes over to the resulting logical error maps. We nevertheless abuse terminology and refer to these as `logical error channels'. Their important features are that they are predictive, and therefore serve as targets for mitigation; in the case of SOLEM, they are also used as building blocks.
Additionally, these exact channels serve as starting points for local approximations, as described in Sec.~\ref{Sec: P2LC}. 

\begin{figure}[t]
    \centering
    \resizebox{\columnwidth}{!}{%
        \begin{tikzpicture}[
  font=\small,
  >=Stealth,
  block/.style={draw, rounded corners=3pt, thick, minimum height=16mm, align=center},
  big/.style={block, minimum width=10mm, fill=pink!60},
  logical/.style={block, minimum width=10mm, fill=white},
  inchan/.style={
    block,
    minimum width=10mm,
    shade,
    left color=white,
    right color=pink!60,
    shading=axis
  },
  braceLabel/.style={midway, yshift=10pt}
]

\node[big] (G1) {$G_1$};
\node[big, right=1mm of G1] (G2top) {$G_2$};
\node[big, right=1mm of G2top] (G3top) {$G_3$};
\node[right=1mm of G3top] (dotstop) {$\cdots$};

\draw[thick] ([yshift=2pt,xshift=2pt]dotstop.east) -- ([yshift=-2pt,xshift=2pt]dotstop.east);
\draw[->, thick] ([xshift=2pt] dotstop.east) -- ++(5mm,0);

\node[logical, yshift=-2.5cm] (g1a) {$g_1$};
\node[logical, right=1mm of g1a] (lambda1a) {$\Lambda_L^{(1)}$};
\node[inchan, right=1mm of lambda1a] (lambda_in_2) {$\Lambda_{\mathrm{in}}^{(2)}$};
\node[big, right=1mm of lambda_in_2] (G2mid) {$G_2$};
\node[big, right=1mm of G2mid] (G3mid) {$G_3$};
\node[right=1mm of G3mid] (dotsmid) {$\cdots$};

\draw[decorate, line width=1.3pt, decoration={brace, amplitude=5pt}]
  (lambda1a.north west) -- (lambda_in_2.north east)
  node[braceLabel, yshift=5pt] {$\Lambda_{\mathrm{out}}^{(1)}$};

\draw[thick] ([yshift=2pt,xshift=2pt]dotsmid.east) -- ([yshift=-2pt,xshift=2pt]dotsmid.east);
\draw[->, thick] ([xshift=2pt] dotsmid.east) -- ++(5mm,0);

\node[logical, yshift=-5cm] (g1b) {$g_1$};
\node[logical, right=1mm of g1b] (lambda1b) {$\Lambda_L^{(1)}$};
\node[logical, right=1mm of lambda1b] (g2b) {$g_2$};
\node[logical, right=1mm of g2b] (lambda2b) {$\Lambda_L^{(2)}$};
\node[inchan, right=1mm of lambda2b] (lambda_in_3) {$\Lambda_{\mathrm{in}}^{(3)}$};
\node[big, right=1mm of lambda_in_3] (G3bot) {$G_3$};
\node[right=1mm of G3bot] (dotsbot) {$\cdots$};

\draw[decorate, line width=1.3pt, decoration={brace, amplitude=5pt}]
  (lambda2b.north west) -- (lambda_in_3.north east)
  node[braceLabel, yshift=5pt] {$\Lambda_{\mathrm{out}}^{(2)}$};

\end{tikzpicture}%
    }
    \caption{%
        Iterative construction of predictive logical error channels; syndrome-averaged case.
        Faulty EC layers (pink) are successively replaced
        by ideal logical operations and logical error maps (white).
        The gradient input-error map connects the logical prefix
        to the remaining physical circuit;
        braces identify the output-error factorization
        $\Lambda_{\mathrm{out}}^{(j)}
        =\Lambda_{\mathrm{in}}^{(j+1)}\Lambda_L^{(j)}$. Reproduced from 
        Ref.~\cite[Fig.~4]{aharonov2025syndromeawaremitigationlogical}.%
    }
    \label{fig:solem-iterative-construction}
\end{figure}

\subsection{Syndrome classification}
To simplify the post-processing of syndromes, we may classify each syndrome
into one of $K$ disjoint subsets,
$S^{(j)}=\bigcup_{k=0}^{K-1}S^{(j)}_k$, using a classifier  $\kappa_j:s_j\mapsto k_j$. As discussed in Sec.~\ref{Sec: intro}, useful and efficient classifiers can be obtained using soft-output decoders  \cite{aharonov2025syndromeawaremitigationlogical, Zhou2025ErrorMitigation,dincă2025errormitigationlogicalcircuits}. The corresponding coarse-grained instrument is
\begin{equation}
    G_{j,k_j}:=\sum_{s_j\in S^{(j)}_{k_j}}G_{j,s_j},
    \qquad G_j=\sum_{k_j=0}^{K-1}G_{j,k_j}.
\end{equation}
For a classified record $\boldsymbol{k}=(k_1,\ldots,k_D)\in [K]^D$,
the same construction gives
\begin{align}
    p_{\boldsymbol{k}}
    &:=\bbra{I}G_{D,k_D}\cdots G_{1,k_1}\kket{G_\psi}
    \nonumber\\
    &=\prod_{j=1}^{D}p^{(j)}_{k_j\mid k_{1:j-1}},
    \label{eq:pk}\\
    \mathbb{E}[o|\boldsymbol{k}]
    &:=\bbra{G_O}G_{D,k_D}\cdots G_{1,k_1}\kket{G_\psi}/p_{\boldsymbol{k}}
    \nonumber\\
    &=\bbra{O}\Lambda^{(D)}_{L\mid k_{1:D}}g_D
    \cdots\Lambda^{(1)}_{L\mid k_1}g_1\kket{\psi},
    \label{eq:predictive-k}
\end{align}
using the coarse-grained instruments.
For brevity, we sometimes write
$\Lambda^{(j)}_{L\mid\boldsymbol{k}}:=\Lambda^{(j)}_{L\mid k_{1:j}}$
and collect them into
$\boldsymbol{\Lambda}_{L\mid\boldsymbol{k}}
=(\Lambda^{(j)}_{L\mid\boldsymbol{k}})_{j=1}^{D}$. 

The data from each circuit execution (`shot')
is a pair $(\boldsymbol{k},o)$. Equation~\eqref{eq:predictive-k} provides a purely logical description for the expectation of $o$ given $\boldsymbol{k}$.

\subsection{Physical-to-logical characterization \label{Sec: P2LC}}

Physical-to-logical characterization (P2LC)
maps a physical noise model and a description of the EC circuit
to \textit{approximate} logical error maps which are computed locally, and in parallel, as opposed to the one global serial computation of the exact  construction \cite[Appendix 2.c]{aharonov2025syndromeawaremitigationlogical}. Given a `recent history' parameter $\mathsf{h}$, an approximate channel for the $j$th layer,
\begin{equation}
    \widetilde{\Lambda}^{(j)}_{L\mid k_{j-\mathsf{h}:j}}\approx \Lambda^{(j)}_{L\mid k_{1:j}},
\end{equation}
is obtained by replacing the input channel $\Lambda^{(j-\mathsf{h})}_{\mathrm{in}\mid k_{1:j-\mathsf{h}-1}}$ of the exact construction by the identity channel; see Fig.~\ref{fig:memory-simulation}(b) for a depiction of the case $\mathsf{h}=1$. This amounts to `forgetting' the far history beyond $\mathsf{h}$ layers backwards, including the correctable errors at that time and the syndrome classes $k_{1:j-\mathsf{h}-1}$ they generally depend on. This also leads to approximate syndrome probabilities,  $\widetilde{p}^{(j)}_{k_j|k_{j-\mathsf{h}:j-1}}\approx p^{(j)}_{k_j\mid k_{1:j-1}}$.

In a standard  $t$-fault-tolerant setting, $\mathsf{h}=1$ suffices to capture the leading-order syndrome-averaged logical errors $\sim\epsilon^{t+1}$, where $\epsilon$ is the physical error rate \cite{aharonov2025syndromeawaremitigationlogical}. A similar guarantee holds when conditioning on likely syndrome-classes, with an $O(1)$ probability. We validate our approximate channels for rare classes empirically, by comparing approximations obtained with different history lengths, and making sure they predict faulty expectation values accurately.  

Since the history parameter $\mathsf{h}=\mathsf{h}_j$ generally depends on the layer $j$ (at least in the beginning of the circuit), we find it convenient to define $h_j:=k_{j-\mathsf{h}_j:j-1}$, and write the approximate quantities as $\widetilde{\Lambda}^{(j)}_{L|k_j,h_j},\widetilde{p}^{(j)}_{k_j|h_j}$. We denote all approximate local channels and probabilities by 
\begin{align}
    \widetilde{\boldsymbol{\Lambda}}_{L|\boldsymbol{k}}=(\widetilde{\Lambda}^{(j)}_{L\mid k_j,h_j})_{j=1}^D,\ \ \ \ \widetilde{\boldsymbol{p}}_{\boldsymbol{k}}=(\widetilde{p}^{(j)}_{k_j|h_j})_{j=1}^D.    
\end{align}

\section{SOLEM framework}
\label{sec:solem_family}

SOLEM estimates the ideal logical expectation value
$\langle O\rangle$ by post-processing measurement records
$\{(\boldsymbol{k}_i,o_i)\}_{i=1}^N$ from $N$ repeated executions of a single, unmodified
error-corrected circuit. These records are generally supplemented by an approximate characterization
of the syndrome-conditioned logical errors
and syndrome probabilities,
\begin{equation}
    \widetilde{\boldsymbol{\Lambda}}_{L|}
    :=
    \bigl(
        \widetilde{\boldsymbol{\Lambda}}_{L\mid\boldsymbol{k}},\widetilde{\boldsymbol{p}}_{\boldsymbol{k}}
    \bigr)_{\boldsymbol{k}}.
\end{equation}
Note that we do not compute logical error channels for every global record $\boldsymbol{k}$; each layer $j$ has a much smaller number of classes $k_j$ and possible recent histories $h_j$. 

With this notation, a SOLEM estimator has the general form
\begin{equation}
    \mathsf{Est}
    =
    \Phi\!\left(
        \{(\boldsymbol{k}_i,o_i)\}_{i=1}^{N};
        \widetilde{\boldsymbol{\Lambda}}_{L|}
    \right),
\end{equation}
where $\Phi$ is a classical function 
aimed at estimating $\langle O\rangle$. Specific constructions may use only part
of the characterization data, or be less sensitive than others to its inaccuracies. As shown in Fig.~\ref{fig:workflows}, a complete SOLEM workflow involves the execution of characterization circuits, used to obtain $\widetilde{\boldsymbol{\Lambda}}_{L|}$ in post-processing. Our circuit-level simulations in Sec.~\ref{sec:circ_leve_simul} simplify this workflow by starting with a given physical noise model; focusing on the task of mapping this to approximately-predictive logical channels $\widetilde{\boldsymbol{\Lambda}}_{L|}$ using P2LC. The solvable model analyzed in Sec.~\ref{sec:model} and the non-Clifford effective logical simulations of Sec.~\ref{Sec: effective logical simulation} further simplify by specifying simplified forms for  exact logical channels $\boldsymbol{\Lambda}_{L|}$.  We next describe three specific constructions within the SOLEM framework, starting with two constructions based on quasi-probability distributions.

\section{\gls{mname} with quasi-probability distributions}\label{sec:solem-qp-based}

\subsection{\gls{mname}-PEC}\label{sec:pec}
In this section we show how an \textit{unbiased} SOLEM estimator for the ideal expectation value $\E{O}$ can be constructed, given the exact predictive logical error channels of Sec.~\ref{subsec:logical error ch}, and assuming they satisfy a specific linear constraint. More generally, \textit{controlled} biases can result both from violations of this constraint, and from inaccurate characterization.

For reference, first consider the naive estimation of $\langle O\rangle$, using just EC. The estimator is the empirical mean $\mathsf{EC}=N^{-1}\sum_i o_i$, and its expectation is  
\begin{align}\label{eq:exp_val_of_ec}
    \mathbb{E}[\mathsf{EC}]
    &=N^{-1}\sum_i\mathbb{E}[\mathbb{E}[o_i| \bsk]] \\
    &=\mathbb{E}[\mathbb{E}[o| \bsk]] \nonumber\\
    &=\sum_{\bsk}p_{\bsk}\,
    \bbra{O}\Lambda_{L|\bsk}^{(D)}g_D\cdots \Lambda_{L|\bsk}^{(1)}g_1\kket{\psi} \nonumber\\
    &=\bbra{O}\Lambda_{L}^{(D)}g_D\cdots \Lambda_{L}^{(1)}g_1\kket{\psi}.\nonumber
\end{align}
This is just the faulty error-corrected expectation value, written in a purely logical form using exact (syndrome-averaged) channels $\Lambda_L^{(j)}$. Above we used the law of total expectation to pass through  
syndrome-class-conditioned expectation values and channels, though this isn't necessary for this simple estimator. Assuming $\|O\|_\infty\leq1$, the variance is just
\begin{align}
    \mathbb{V}[\mathsf{EC}]
    =N^{-1}\mathbb{V}[o]
    \leq N^{-1}.
\end{align}

\gls{mname}-PEC is designed to replace the naturally occurring probabilities $p_{\boldsymbol{k}}$ in Eq.~\eqref{eq:exp_val_of_ec} with chosen quasi-probabilities (QPs) $q_{\boldsymbol{k}}\in \mathbb{R}$,  $\sum_{\boldsymbol{k}}q_{\boldsymbol{k}}=1$, such that the conditioned logical errors \textit{cancel}, instead of being averaged. The SOLEM-PEC estimator is defined by 
\begin{align}
    \mathsf{Est}
    =\frac{1}{N}\sum_i \frac{q_{\bsk_i}}{p_{\bsk_i}}\, o_i,
\end{align}

where $q_{\bsk}=\prod_j q^{(j)}_{\bsk}$ is a product of layer-wise QPs
\begin{align}
    q^{(j)}_{\bsk}=q^{(j)}_{k_j|k_{1:j-1}},
\end{align}
in analogy with the structure of $p_{\bsk}$ (Eq.~\eqref{eq:pk}). The expectation is
\begin{align}\label{eq:exp_val_of_est}
    \mathbb{E}[\mathsf{Est}]
    &=N^{-1}\sum_i\mathbb{E}[(q_{\bsk_i}/p_{\bsk_i})\mathbb{E}[o| \bsk_i]] \\
    &=\mathbb{E}[(q_{\bsk}/p_{\bsk})\mathbb{E}[o| \bsk]] \nonumber\\
    &=\sum_{\bsk}q_{\bsk}\,
    \bbra{O}\Lambda_{L|\bsk}^{(D)}g_D\cdots \Lambda_{L|\bsk}^{(1)}g_1\kket{\psi} \nonumber\\
    &=\sum_{\bsk}
    \bbra{O}
    q_{\bsk}^{(D)}\Lambda_{L|\bsk}^{(D)}g_D\cdots
    q_{\bsk}^{(1)}\Lambda_{L|\bsk}^{(1)}g_1
    \kket{\psi},
    \nonumber
\end{align}
in analogy with Eq.~\eqref{eq:exp_val_of_ec}.

\begin{proposition}[Cancellation condition]\label{claim:unbiased_qp}
Suppose that for each layer $j=1,\dots,D$, and history $k_{1:j-1}$, the coefficients $q^{(j)}_{\bsk}$ satisfy
\begin{align}\label{eq:qk_for_identity}
    \sum_{k_j} q_{\bsk}^{(j)}\Lambda_{L|\bsk}^{(j)}
    =
    \sum_{k_j} q_{k_j|k_{1:j-1}}^{(j)}\Lambda_{L|k_{1:j}}^{(j)}
    =I.
\end{align}
Then the SOLEM-PEC estimator $\mathsf{Est}$ is unbiased,
\begin{align}
    \mathbb{E}[\mathsf{Est}]=\bbra{O}g_D\cdots g_1\kket{\psi}=\E{O}.
\end{align}
\end{proposition}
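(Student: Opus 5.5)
The plan is to start from the last line of Eq.~\eqref{eq:exp_val_of_est}. There, $\mathbb{E}[\mathsf{Est}]$ is written as a sum over full records $\bsk=(k_1,\dots,k_D)$ of products in which each layer carries the factor $q^{(j)}_{k_j|k_{1:j-1}}\Lambda^{(j)}_{L|k_{1:j}}$. That identity only uses the law of total expectation and the predictive property Eq.~\eqref{eq:predictive-k}, so it may be taken as given. One caveat: dividing by $p_{\bsk}$ requires $p_{\bsk}>0$ whenever $q_{\bsk}\neq0$. I will state this as the standing convention, namely that $q$ is supported on records of nonzero probability, or equivalently that records with $p_{\bsk}=0$ contribute nothing.

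The key step is to perform the sum over $\bsk$ iteratively, starting from the last layer and moving backwards. The summand is linear in each layer's factor. Write $\sum_{\bsk}=\sum_{k_1}\cdots\sum_{k_D}$ and note that the innermost variable $k_D$ appears only in the leftmost factor $q^{(D)}_{k_D|k_{1:D-1}}\Lambda^{(D)}_{L|k_{1:D}}$. Pulling $\sum_{k_D}$ inside the bra-ket against that factor and applying the cancellation condition Eq.~\eqref{eq:qk_for_identity} with history $k_{1:D-1}$ replaces it with $I$. The remaining expression is $\bbra{O}g_D\,q^{(D-1)}\Lambda^{(D-1)}_{L|k_{1:D-1}}g_{D-1}\cdots\kket{\psi}$, summed over $k_{1:D-1}$.

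Formally I will argue by downward induction on $m$. The claim is that after summing $k_D,\dots,k_{m+1}$, the expectation equals
\begin{equation*}
\sum_{k_{1:m}}\bbra{O}g_D\cdots g_{m+1}\,q^{(m)}_{k_m|k_{1:m-1}}\Lambda^{(m)}_{L|k_{1:m}}g_m\cdots q^{(1)}_{k_1}\Lambda^{(1)}_{L|k_1}g_1\kket{\psi}.
\end{equation*}
The inductive step is exactly the one described above: $k_m$ appears only in the leftmost remaining conditioned factor, because the later factors depend only on $k_{1:m-1}$ and the ideal gates $g$ do not depend on $\bsk$ at all. At $m=0$ this gives $\bbra{O}g_D\cdots g_1\kket{\psi}=\E{O}$.

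The main subtlety to handle carefully is the dependence structure. Each factor $\Lambda^{(j)}_{L|k_{1:j}}$ depends on the entire past history, so the layer sums cannot be factorized globally. The right order is to sum the latest index first, since it never appears in the earlier-layer factors. The condition is required for every history $k_{1:j-1}$, which is precisely what makes the inner sum collapse to $I$ regardless of the outer indices. Beyond that, the argument needs only linearity of the bra-ket in the maps, finiteness of the sums, and $\sum_{\bsk}q_{\bsk}=1$. The last of these is in fact implied by the condition: taking the trace-preserving part when the channels are TP shows this, although the argument itself never uses it.
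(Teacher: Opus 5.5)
Your proposal is correct and follows the paper's proof. Like the paper, you start from the last line of Eq.~\eqref{eq:exp_val_of_est}, sum out $k_D$ first using the cancellation condition for each fixed history, and repeat down to $k_1$; your downward induction just formalizes this, and your caveat restricting $q$ to records with $p_{\bsk}>0$ is a harmless refinement the paper leaves implicit.
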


\begin{proof}
Starting from Eq.~\eqref{eq:exp_val_of_est}, we first sum over $k_D$. Eq.~\eqref{eq:qk_for_identity} gives
\(
    \sum_{k_D} q^{(D)}_{\bsk}\Lambda^{(D)}_{L|\bsk}=I
\),
eliminating the factor $q^{(D)}_{\bsk}\Lambda^{(D)}_{L|\bsk}$. Repeating the same argument for $k_{D-1},k_{D-2},\dots,k_1$ yields
\(
    \mathbb{E}[\mathsf{Est}]=\E{O}
\).
\end{proof}

\subsubsection{Bias and shot overhead}
Of course, exact cancellation is too strict of a requirement, and it suffices to require that 
\begin{align}\label{eq:min_norm}
    \delta^{(j)}_{|k_{1:j-1}}:= \left\| \sum_{k_j} q_{k_j|k_{1:j-1}}^{(j)}\Lambda_{L|k_{1:j}}^{(j)}-I\right\|
\end{align}
is small enough, for all layers $j$ and histories $k_{1:j-1}$, in order to control the resulting mitigation bias. Worst-case bounds on the bias are phrased in terms of diamond norms, while more refined estimates of the bias can be written in terms of Frobenius norms \cite[Appendix A]{aharonov2025reliablehighaccuracyerrormitigation}.

The shot overhead of the SOLEM-PEC estimator is bounded as 
\begin{align}
N\,\mathbb{V}[\mathsf{Est}]=\mathbb{V}[(q_{\bsk}/p_{\bsk})o]
    \leq \mathbb{E}[q^2_{\bsk}/p^2_{\bsk}]=\sum_{\bsk} q_{\bsk}^2/p_{\bsk},
    \label{eq:gamma_eq}
\end{align}
irrespective of whether the estimator is exactly unbiased or not. We therefore define the function 
\begin{align}
    \mathcal{D}(q;p)=\sum_{\bsk} q_{\bsk}^2/p_{\bsk},
\end{align}
which we wish to minimize when constructing $q_{\bsk}$, in addition to $\delta$. Note that this is essentially the Pearson $\chi^2$ divergence of $q$ and $p$, 
\begin{align}
    \mathcal{D}(q;p)-1=\sum_{\bsk} (q_{\bsk}-p_{\bsk})^2/p_{\bsk},
\end{align}
where we used $\sum_{\bsk}q_{\bsk}=1=\sum_{\bsk}p_{\bsk}$. Further note that $\mathcal{D}(q;p)$ does not decompose as a product over layers, due to history-dependence,
\begin{align}
    \mathcal{D}(q;p)=\sum_{k_{1:D}}\prod_{j=1}^D\frac{(q_{k_j|k_{1:j-1}}^{(j)})^2}{p^{(j)}_{k_j|k_{1:j-1}}}.
\end{align}
We can however bound it as
\begin{align}
    \mathcal{D}(q;p)\leq \prod_{j=1}^D \mathcal{D}^{(j)}, 
\end{align}
where $\mathcal{D}^{(j)}$ are per-layer bounds on all possible histories, 
\begin{align}
    \mathcal{D}^{(j)}
    :=&
    \max_{k_{1:j-1}}
    \mathcal{D}^{(j)}_{|k_{1:j-1}},\\
    \mathcal{D}^{(j)}_{|k_{1:j-1}}:=&\sum_{k_j} \frac{(q_{k_j|k_{1:j-1}}^{(j)})^2}{p^{(j)}_{k_j|k_{1:j-1}}}.\nonumber
\end{align}
This forms the basis for minimizing $\mathcal{D}^{(j)}_{|k_{1:j-1}}$ separately for each layer and history, as a means to control the shot overhead. 

Summarizing this discussion, our construction of QP distributions attempts to jointly minimize $\delta^{(j)}_{|k_{1:j-1}}$ and $\mathcal{D}^{(j)}_{|k_{1:j-1}}$, separately for each layer and history, as a means to minimize both bias and variance. This generally leads to families of solutions, corresponding to an inherent tradeoff between bias and variance. We refer to Sec.~\ref{sec:circ_leve_simul} for a concrete example.     

\subsubsection{Comparison with external PEC}  

We stress the different roles of $q_{\bsk}$ and $p_{\bsk}$ here, compared to standard PEC~\cite{Temme2017QEM,Endo2018QEM}. In the standard case, circuit modifications labeled by $\sigma$ are sampled, and one is free to choose the probability $p_{\sigma}$, for a given $q_{\sigma}$. The standard choice is obtained by minimizing $\mathcal{D}(q;p)$ over $p$, with the constraints $\sum_{\sigma} p_{\sigma} =1, p_{\sigma} \geq 0$; yielding $p_{\sigma} ^*=|q_{\sigma} |/\|q\|_1$ and $\mathcal{D}(q;p^*)=\|q\|_1^2$. For this reason, when constructing $q_{\sigma} $, one attempts to minimize the `QP norm' $\|q\|_1$, controlling the resulting shot overhead. In contrast, in SOLEM-PEC, the naturally-occurring probabilities $p_{\bsk}$ are fixed. We therefore attempt to minimize $\mathcal{D}(q;p)$ as part of the construction of $q_{\bsk}$, rather than the QP norm. 

The above probability-optimization of standard PEC shows that $\mathcal{D}(q;p)\geq \|q\|_1^2$, which is one feature working against SOLEM-PEC, compared with external PEC.  This does not establish an ordering between the overheads of SOLEM-PEC
and external PEC, since they use different sets of logical maps:
SOLEM combines syndrome-conditioned maps,
whereas external PEC combines modifications of
syndrome-averaged maps. Syndrome conditioning can identify classes with substantially
smaller logical errors, enabling cancellation with a lower shot overhead, despite using fixed probabilities.
Our simplified model
(Sec.~\ref{sec:model-overheads}, Fig.~\ref{fig:pec-lam})
and circuit-level simulations
(Sec.~\ref{sec:circ_leve_simul},
Fig.~\ref{fig:memory-simulation})
demonstrate regimes in which SOLEM-PEC significantly
outperforms external PEC, and performs comparably to SALEM-PEC.

\subsection{\gls{mname}-ZNE}\label{sec:solem-zne}
SOLEM-PEC synthesizes the identity channel by linearly combining the
naturally occurring syndrome-conditioned logical channels. The same
construction can instead synthesize a family of partially mitigated or
amplified channels, providing an offline analogue of ZNE. 
For the construction below, we assume exact characterization and
exact solutions of the QP equations;
an approximate solution introduces an additional
channel-synthesis bias.

For each location \(j\), choose a reference logical channel
\(\Lambda_{\mathrm{ref}}^{(j)}\) and coefficients
\(y_{\bsk}^{(j)}\) satisfying
\begin{align}\label{eq:pea_eq_for_qk}
\sum_{k_j}y_{\bsk}^{(j)}
\Lambda_{L|\bsk}^{(j)}
=\Lambda_{\mathrm{ref}}^{(j)}.
\end{align}
We normalize the noise parameter \(a\) so that \(a=1\) corresponds to
the reference channel and \(a=0\) to the identity. Combining the
coefficients in Eq.~\eqref{eq:pea_eq_for_qk} with the identity-channel
coefficients of Eq.~\eqref{eq:qk_for_identity} gives\footnote{More generally one can use any parametrization 
\(y_{\bsk}^{(j)}(a)
= y_{\bsk}^{(j)} +\varphi(a) (q_{\bsk}^{(j)} -y_{\bsk}^{(j)})\)
with \(\varphi(0)=1\) and \(\varphi(1)=0\).
}
\begin{align}\label{eq:solem-zne-local-coefficients}
y_{\bsk}^{(j)}(a)
=a\,y_{\bsk}^{(j)} +(1-a)q_{\bsk}^{(j)},
\end{align}
which synthesizes
\begin{align}\label{eq:solem-zne-channel-path}
\Lambda^{(j)}(a)
:=a\Lambda_{\mathrm{ref}}^{(j)}+(1-a)I.
\end{align}

For a reference channel admitting an affine decomposition into
nonidentity CPTP error channels,
\begin{align}
\Lambda_{\mathrm{ref}}^{(j)}
={}&
\left(1-\sum_{\me\neq I}p^{(j)}(\me)\right)I
+\sum_{\me\neq I}p^{(j)}(\me)\me,
\label{eq:error-channel-rescaling}
\end{align}
Eq.~\eqref{eq:solem-zne-channel-path} rescales
\(p^{(j)}(\me)\mapsto a p^{(j)}(\me)\). 
Values \(0<a<1\) reduce the noise, while admissible values \(a>1\)
amplify it.

Define
\(y_{\bsk}(a):=\prod_{j=1}^{D}y_{\bsk}^{(j)}(a)\). The
estimator
\begin{align}\label{eq:solem-zne-estimator}
\widehat{O}_a
=\frac{1}{N}\sum_{i=1}^{N}
\frac{y_{\bsk_i}(a)}{p_{\bsk_i}}o_i
\end{align}
is unbiased for the expectation value \(\E{O}_a\) of the circuit with
channels \(\Lambda^{(j)}(a)\). Crucially, \(a\) enters only through the
classical shot weights. The same measured records therefore yield
correlated estimates of \(\E{O}_a\) at any collection of admissible\footnote{We call \(a\) admissible when every \(\Lambda^{(j)}(a)\) is CPTP.}
noise levels, without modifying or rerunning the circuit.

The same records also provide derivative information. For any
nonnegative integer \(l\),
\begin{align}\label{eq:solem-zne-derivative-estimator}
\left.\frac{d^l\widehat{O}_a}{da^l}\right|_{a=a_0}
=\frac{1}{N}\sum_{i=1}^{N}
\left.\frac{d^l}{da^l}
\left(\frac{y_{\bsk_i}(a)}{p_{\bsk_i}}\right)
\right|_{a=a_0}o_i
\end{align}
is an unbiased estimator for
\(\left.d^l\E{O}_a/da^l\right|_{a=a_0}\). Explicit derivative
weights are given in App.~\ref{sec:deriv}.

To perform SOLEM-ZNE, we approximate the noise response by a
low-dimensional model 
\[\E{O}_a \approx f_{\boldsymbol{\theta}}(a), \]
fit
\(\boldsymbol{\theta}\) using any chosen set of values, derivatives, or
both, and report \(f_{\widehat{\boldsymbol{\theta}}}(0)\) as the
mitigated estimate. Because all fitted quantities are constructed from
the same data, their full covariance should be included in the fit.  
One covariance-aware choice is generalized least squares (GLS), which
weights the residuals by their inverse covariance matrix. 
If the noise response belongs to the fitted family, GLS using all shots
at every amplification point has asymptotic covariance no larger than
GLS obtained by dividing the same measurement records into independent
ensembles assigned to those points; see Obs.~\ref{obs:shared-shot-gls}.

The construction extends naturally to multiple reference channels.
If \(y_{\bsk,\nu}^{(j)}\) represents
\(\Lambda_{\mathrm{ref},\nu}^{(j)}\), then
\begin{align}
y_{\bsk}^{(j)}(\boldsymbol a)
&=q_{\bsk}^{(j)}
+\sum_{\nu=1}^{R}a_\nu
\bigl(y_{\bsk,\nu}^{(j)}-q_{\bsk}^{(j)}\bigr)
\end{align}
synthesizes the channel family
\begin{align}
\Lambda^{(j)}(\boldsymbol a)
&=I+\sum_{\nu=1}^{R}a_\nu
\bigl(\Lambda_{\mathrm{ref},\nu}^{(j)}-I\bigr).
\end{align}
The same measured records therefore enable multivariate extrapolation
to the noiseless point \(\boldsymbol a=\boldsymbol 0\).

SOLEM-ZNE differs from conventional ZNE in both the choice of noise
levels and their statistical cost. In conventional ZNE, noise levels
are chosen before data acquisition and implemented through circuit
modifications, with separate shot ensembles typically used at each
level. When additional noise can be implemented as a probabilistic
mixture of physical operations, amplification incurs no QP sampling
penalty.

In SOLEM-ZNE, the noise levels are chosen entirely in post-processing,
and their estimators are correlated because they use the same shots.
Both amplification and reduction are implemented by reweighting the
measured records, so amplification is not necessarily less costly than
reduction. These differences affect the optimal noise levels, as
illustrated in Sec.~\ref{sec:model_qp_based}.

\section{SOLEM with global regression (SOLEM-GR)}\label{sec:solem-gr}

The QP-based constructions of the previous section require the available
syndrome-conditioned logical channels to provide a sufficiently accurate,
well-conditioned representation of the desired channel. This need not hold,
for example when a small number of syndrome classes is used for a block that
encodes several logical qubits. Even when a useful QP representation exists,
additional assumptions on the observable response can yield a lower sampling
overhead. SOLEM-GR introduces such an assumption and estimates the ideal
expectation value by regression, without first constructing a QP
representation. Rather than deliberately varying the circuit noise, SOLEM-GR
uses the natural shot-to-shot variation in syndrome-conditioned logical noise
to extrapolate to the ideal expectation value without modifying the circuit.

Without additional structure, the response \(\mathbb E[o\mid\bsk]\) may
contain as many as \(K^D\) unrelated conditional means. SOLEM-GR reduces this
complexity by mapping each syndrome record to a scalar global decay factor,
\begin{equation}
Z_{\bsph}:[K]^D\longrightarrow\mathbb R,
\end{equation}
where \(\bsph\) denotes a small number of nuisance parameters governing the response to logical
errors. The map depends on characterized syndrome-conditioned logical
channels, which are left implicit. We normalize \(Z_{\bsph}\) so that its
noiseless continuation, obtained by setting the characterized error parameters
to zero, satisfies
\begin{equation}
Z_{\bsph}^{(0)}(\bsk)=1.
\end{equation}

The central modeling assumption of SOLEM-GR is that the coarse-grained
response is approximately multiplicative,
\begin{equation}\label{eq:solem-gr-decay-model}
\mathbb E[o\mid Z_{\bsph_*}=z]\simeq \theta_* z,
\qquad \theta_*\simeq\E{O}.
\end{equation}
Individual syndrome records with the same value of \(Z_{\bsph_*}\) need not
have identical conditional responses; Eq.~\eqref{eq:solem-gr-decay-model}
models their average. Because the noiseless continuation of the decay factor
is one, the corresponding continuation of the fitted response is simply
\(\theta\), which is therefore the SOLEM-GR estimate of the ideal expectation
value.

For a loss function \(\ell:\mathbb R^2\to\mathbb R_+\), we fit the amplitude
\(\theta\) and nuisance parameters \(\bsph\) by
\begin{equation}\label{eq:solem-gr-estimator}
(\widehat\theta,\widehat\bsph)
\in\argmin_{\theta,\bsph}
\frac{1}{N}\sum_{i=1}^N
\ell\!\left(o_i,\theta Z_{\bsph}(\bsk_i)\right),
\qquad
\mathsf{Est}=\widehat\theta.
\end{equation}
Common choices are squared loss and, for binary outcomes
\(o\in\{-1,1\}\), the conditional Bernoulli negative log-likelihood
\begin{equation}
\ell(o,f)=-\log\!\left(\frac{1+of}{2}\right),
\qquad |f|<1.
\end{equation}
The target \(\theta\) must be identifiable from the distribution of
\((\bsk,o)\) under this model; otherwise the ideal expectation value cannot
be inferred from the measured data. The nuisance parameters need not be
uniquely identifiable if \(\theta\) is. A misspecified decay-factor model can
also produce a systematic error. We examine this source of bias numerically
in Sec.~\ref{Sec: effective logical simulation}.

The regression model can
be exact when the characterized channels imply a low-dimensional
conditional response. The following elementary example illustrates this
case.
\begin{example}[Global depolarization]\label{ex:glob_depol}
Suppose that, conditioned on \(\bsk\), the logical noise at location
\(j\) is globally depolarizing,
\begin{equation}
\Lambda^{(j)}_{L|\bsk}(\sigma)
=
(1-\epsilon^{(j)}_{\bsk})\sigma
+\epsilon^{(j)}_{\bsk}2^{-n}I\operatorname{Tr}(\sigma).
\end{equation}
For unitary logical layers and a traceless observable, global
depolarization attenuates the ideal expectation multiplicatively, giving
\begin{equation}
\mathbb E[o\mid\bsk]
=
\E{O}\prod_j(1-\epsilon^{(j)}_{\bsk}).
\end{equation}
Therefore, choosing
\[
Z(\bsk)=\prod_j(1-\epsilon^{(j)}_{\bsk}),
\]
gives the exact decay model
\(\mathbb E[o\mid Z=z]=\E{O}z\), with noiseless continuation
\(Z^{(0)}=1\). The squared-loss estimator is
\begin{equation}
\widehat\theta
=
\frac{\sum_i z_i o_i}{\sum_i z_i^2},
\end{equation}
and is unbiased.
\end{example}

This example shows that compression of the complete syndrome record to a
single feature need not lose information about the ideal expectation.
The solvable model of Sec.~\ref{sec:model} provides a less restrictive
setting in which the response contains an additional nuisance parameter.

\section{Circuit-level memory simulation}
\label{sec:circ_leve_simul}

To assess SOLEM with realistic syndrome statistics and temporal
correlations, we apply it to data from a circuit-level simulation of a
single-logical-qubit Steane-code memory.

We use the 1-FT flagged syndrome-extraction scheme of
Ref.~\cite{Chao_2018}, with one measurement ancilla and one flag ancilla;
see Fig.~\ref{fig:memory-simulation}(a). The six stabilizer measurement outcomes and
seven possible flag-register values (no flag or one of the six flagged
gadgets) give \(7\times2^6=448\) possible syndrome records per EC cycle.
Recovery is performed using a flag-aware lookup-table decoder.

Our circuit-level noise model applies the two-qubit depolarizing channel
\(
\sigma\mapsto
(1-\epsilon)\sigma+\epsilon\frac{I_4}{4}\tr(\sigma)
\)
after every \(\mathrm{CX}\) gate, and applies a bit-flip channel with error
probability \(\epsilon/2\) after every reset and before every measurement.
We set \(\epsilon=4\times10^{-4}\). For each of 30 independently sampled
depths \(D\), ranging from 256 to 20480 EC cycles, we generate
\(N=10^5\) shots. At each depth, all estimators are constructed from the
same syndrome and logical-measurement records.

\begin{figure*}[t!]
    \begin{centering}
    \includegraphics[width=\textwidth]{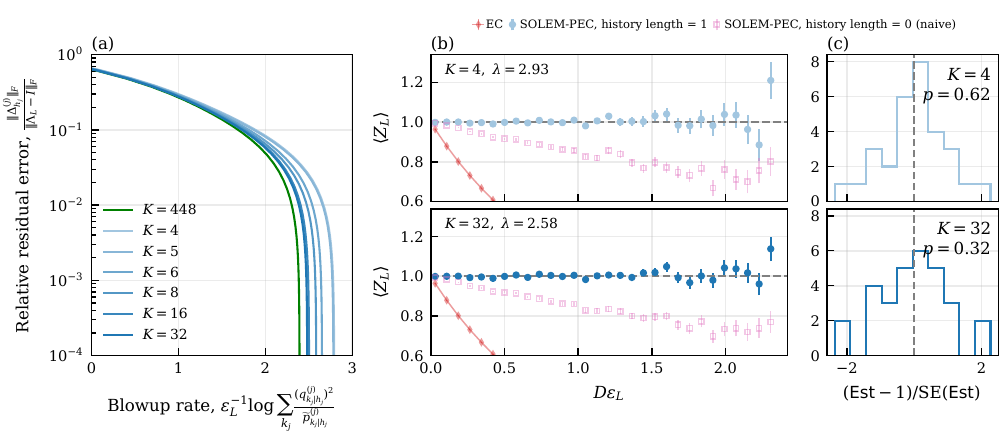}
    \end{centering}
    \caption{
\textbf{SOLEM-PEC bias--overhead tradeoff and mitigation performance.}
\textbf{(a)} Relative residual-channel Frobenius norm versus the local
sampling-overhead exponent for syndrome partitions of size \(K\), including
the fine-grained \(K=448\) partition. We define
\(\Delta^{(j)}_{h_j}:=
\sum_{k_j}q^{(j)}_{k_j\mid h_j}
\widetilde{\Lambda}^{(j)}_{L\mid k_j,h_j}-I\).
The curves use channels characterized with no incoming logical error,
corresponding to \(h_j=()\), and their points correspond to the sampled
values of \(\eta\) in Eq.~\eqref{eq:qp_cost_func}. For every displayed
partition, the rightmost point has a negligible residual norm.
\textbf{(b)} SOLEM-PEC estimates of \(\langle Z_L\rangle\) versus
\(D\epsilon_L\) for \(K=4\) (top) and \(K=32\) (bottom). Circles use
history-one channels, \(h_j=(k_{j-1})\), whereas faint pink squares use the
naive history-zero P2LC approximation, \(h_j=()\), which neglects incoming
logical errors to each EC cycle. For every history, we use the sampled QP
solution with the smallest residual Frobenius norm. The unmitigated EC
estimates are shown for comparison.
\textbf{(c)} 
Histograms of standardized residuals for the history-one
estimators over all 30 independently sampled circuit depths.
We test consistency with the ideal value using a standard chi-squared test
and report the corresponding \(p\)-values, which reveal no statistically
significant bias.
\label{fig:solem_pec_panels}
}
\end{figure*}

We use P2LC to obtain approximate syndrome-conditioned logical channels.
Specifically, we replace the exact channels
\(\Lambda^{(j)}_{L\mid\bsk}\) by truncated channels
\(\widetilde{\Lambda}^{(j)}_{L\mid k_j,h_j}\) with history length one.
For \(j>1\), the local record is \((k_{j-1},k_j)\), with
\(h_j=(k_{j-1})\); for \(j=1\), no previous syndrome is available and
\(h_1=()\).

The first step in the P2LC procedure is to partition the original syndromes
\(s\) into \(K\) subsets. We construct the partition using the
syndrome-conditioned logical Pauli channels \(\Lambda_{L\mid s}\)
characterized without incoming errors. The 96 syndromes with the smallest
nonidentity logical-error probabilities form subset \(k=0\). The remaining
syndromes are clustered according to the three nonidentity Pauli
coefficients of \(\Lambda_{L\mid s}\), using syndrome-probability-weighted
\(k\)-means. The size 96 of the first subset is a tuned hyperparameter.

\vspace{5pt}

\paragraph*{SOLEM-PEC and SOLEM-ZNE.}
For SOLEM-PEC, we seek accurate local cancellation for each history \(h_j\)
while controlling the sampling overhead. We therefore minimize a weighted
sum of the sampling second moment and the squared residual-channel norm,
\begin{align}
\label{eq:qp_cost_func}
\min_{\{q^{(j)}_{k_j\mid h_j}\}_{k_j}}
\Bigg\{
\sum_{k_j}
\frac{\bigl(q^{(j)}_{k_j\mid h_j}\bigr)^2}
     {\widetilde{p}^{(j)}_{k_j\mid h_j}}
+
\eta
\left\lVert
\sum_{k_j}
q^{(j)}_{k_j\mid h_j}
\widetilde{\Lambda}^{(j)}_{L\mid k_j,h_j}
-I
\right\rVert_F^2
\Bigg\},
\end{align}
where \(\eta>0\) controls the bias--overhead tradeoff. The results in
Fig.~\ref{fig:solem_pec_panels} illustrate this tradeoff and show the
importance of retaining the previous syndrome in the P2LC model. For the
history-one estimators, the consistency tests give \(p=0.62\) for \(K=4\)
and \(p=0.32\) for \(K=32\); thus, no bias is statistically resolved.

Figure~\ref{fig:memory-simulation}(d) also shows two-point SOLEM-ZNE with
\(K=4\) and history length one, for which the same test gives \(p=0.22\).
For SOLEM-PEC, increasing the partition size to \(K=32\) gives
\(\lambda=2.58\), close to the value 2.46 obtained for
fine-grained SALEM in Ref.~\cite{aharonov2025syndromeawaremitigationlogical}.

\vspace{5pt}

\paragraph*{SOLEM-GR.}
For a memory circuit with Pauli channels
\(\widetilde{\Lambda}^{(j)}_{L\mid k_j,h_j}\) and a Pauli observable, the
truncated P2LC model gives
\[
\mathbb{E}[o\mid\bsk]
=
\E{O}\prod_{j=1}^{D}
\left(1-2\rho_{k_j,h_j}\epsilon_{L\mid k_j,h_j}\right).
\]
Here \(\epsilon_{L\mid k_j,h_j}\) is the nonidentity Pauli probability in
\(\widetilde{\Lambda}^{(j)}_{L\mid k_j,h_j}\), and
\(\rho_{k_j,h_j}\) is the fraction of that probability that flips the
observable's sign. For \(Z_L\), it is the total \(X\)- and \(Y\)-error
probability divided by \(\epsilon_{L\mid k_j,h_j}\).

At a fixed depth \(D\), independently fitting all
\(\rho_{k,h}\) makes \(\E{O}\) structurally non-identifiable, rather than
merely increasing its estimation variance. Indeed, for sufficiently small
\(c\), the transformation
\(\E{O}\mapsto e^{-Dc}\E{O}\) and
\(\rho_{k,h}\mapsto
[1-e^c(1-2\rho_{k,h}\epsilon_{L\mid k,h})]/
(2\epsilon_{L\mid k,h})\)
leaves every conditional response invariant. We therefore use 
\begin{align*}
Z_{\phi}(\bsk)
&=
\prod_{j=1}^{D}
\left(1-\phi\epsilon_{L\mid k_j,h_j}\right),
\qquad 
\mathbb{E}[o \mid Z_\phi = z ] \simeq \theta z, 
\end{align*}
and jointly fit \(\theta, \phi\in\mathbb{R}\) by unweighted least squares.
A chi-squared goodness-of-fit test gives \(p=0.78\), providing no statistically significant evidence of bias.

\section{Analysis of a solvable model}\label{sec:model}
To gain qualitative insight into the overheads of the SOLEM variants and compare them with existing mitigation methods, we introduce an effective logical-noise model that abstracts away the details of particular codes, P2LCs, and decoding algorithms. We first define the model and derive its basic properties. We then analyze the SOLEM estimators within this model and compare their sampling overheads with those of existing methods.

\subsection{Model definition and structure}
Consider an arbitrary circuit with \(V\) designated sites at which syndromes
are measured. The sites may occur at arbitrary circuit locations. We impose
the following assumptions:
\begin{enumerate}[label=(\roman*)]
\item each syndrome is assigned a binary label \(k_j\in\{0,1\}\), giving
the record \(\bsk=(k_1,\ldots,k_V)\);
\item the syndrome-conditioned logical channel is independent of the site, and has the parallel-channel form
\begin{align}\label{eq:parallel}
\Lambda_{L|k}=(1-\epsilon_k)I+\epsilon_k E,
\end{align}
where \(E\) is CPTP and \(0<\epsilon_0<\epsilon_1<1\); and
\item the syndromes are independent and identically distributed, with
\(\mathbb P(k_j=1)=p_1\), where \(0<p_1<1\).
\end{enumerate}
The two conditional channels therefore lie along the common direction
\(E-I\).\footnote{Section~\ref{sec:non-parallel-Es} discusses the
nonparallel case.} For one-qubit depolarizing noise, for example,
\(E=(X+Y+Z)/3\).

Let \(N_1:=|\boldsymbol{k}|\) denote the number of syndrome-1
outcomes. The mean error probability is
\begin{align}
\epsilon_L := p_1 \epsilon_1 + (1-p_1) \epsilon_0,
\end{align}
and the corresponding mean channel is
\(\Lambda_{L}=(1-\epsilon_L)I+\epsilon_L E\).

In this model, error-count statistics and the circuit response
separate as follows.
For \(J\subseteq[V]\), let \(o_J\)
denote the expectation value of \(O\) when \(E\) acts at the locations in
\(J\) and \(I\) acts at all other noisy sites. The resulting decomposition
is stated below.
\begin{proposition}[Parallel-channel decomposition]
\label{prop:ev-cond-N1}
For \(l=0,\ldots,V\),
\[
  \mathbb{E}\!\left[o\mid N_1=l\right]
  =\sum_{r=0}^{V}K_{r,l}C_r,
\]
where
\[
  C_r:=\frac{1}{\binom{V}{r}}
  \sum_{\substack{J\subseteq[V]\\|J|=r}}o_J,
  \qquad
  K_{r,l}:=\mathbb P(N_E=r\mid N_1=l),
\]
and \(N_E\) is the number of locations at which an error \(E\) occurs.
Conditioned on \(N_1=l\), \(N_E\) is distributed as the sum of
independent \(\operatorname{Bin}(l,\epsilon_1)\) and
\(\operatorname{Bin}(V-l,\epsilon_0)\) random variables.
\end{proposition}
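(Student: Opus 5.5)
The plan is to expand the conditional expectation multilinearly in the per-site channels, then use exchangeability to collapse the sum over error patterns to error counts. By Eq.~\eqref{eq:predictive-k}, specialized to the model's site-independent channels, for a fixed record $\bsk$ we have $\mathbb{E}[o\mid\bsk]=\bbra{O}\cdots\Lambda_{L|k_j}\cdots\kket{\psi}$. Each site carries $\Lambda_{L|k_j}=(1-\epsilon_{k_j})I+\epsilon_{k_j}E$. Since the circuit expression is linear in each site's map separately, expanding all $V$ factors gives
\begin{align*}
\mathbb{E}[o\mid\bsk]=\sum_{J\subseteq[V]}\Bigl(\prod_{j\in J}\epsilon_{k_j}\prod_{j\notin J}(1-\epsilon_{k_j})\Bigr)o_J .
\end{align*}
The coefficient of $o_J$ is exactly the probability that the error set equals $J$. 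Here we introduce auxiliary independent Bernoulli variables $e_j$ with $\mathbb P(e_j=1\mid\bsk)=\epsilon_{k_j}$. This expresses $\mathbb{E}[o\mid\bsk]=\mathbb{E}[o_{J_E}\mid\bsk]$, where $J_E=\{j:e_j=1\}$ is the random error set.

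Next I condition on $N_1=l$ instead of on $\bsk$. By assumption (iii), the syndromes are i.i.d., so given $N_1=l$ the record $\bsk$ is uniform over the $\binom{V}{l}$ records with $l$ ones. By the tower property, $\mathbb{E}[o\mid N_1=l]=\mathbb{E}[o_{J_E}\mid N_1=l]$. The key symmetry is that the joint law of $(\bsk,e)$ given $N_1=l$ is invariant under permutations of the sites. This holds because the $k$-distribution is exchangeable and each $e_j$ depends only on $k_j$ through the site-independent rule $\epsilon_{k_j}$. Therefore, given $N_1=l$ and $N_E=|J_E|=r$, the set $J_E$ is uniform over the $r$-subsets of $[V]$. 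It follows that $\mathbb{E}[o_{J_E}\mid N_1=l,N_E=r]=C_r$. Summing over $r$ with weights $\mathbb P(N_E=r\mid N_1=l)=K_{r,l}$ then yields the claimed decomposition.

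Finally, the distribution of $N_E$ given $N_1=l$ follows directly. Given $\bsk$, the $e_j$ are independent, with $l$ of them being $\mathrm{Bernoulli}(\epsilon_1)$ and $V-l$ of them being $\mathrm{Bernoulli}(\epsilon_0)$. So $N_E$ given $\bsk$ is the sum of independent $\mathrm{Bin}(l,\epsilon_1)$ and $\mathrm{Bin}(V-l,\epsilon_0)$ variables. This law depends on $\bsk$ only through $l$, so it is also the law conditioned on $N_1=l$.

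The main obstacle is the exchangeability step. It requires stating carefully that permuting the sites preserves the joint law of $(\bsk,e)$ while $o_J$ itself is not symmetric. The argument only averages $o_J$ over uniformly distributed $J$, which is exactly what $C_r$ encodes. A minor point is that the $o_J$ are defined even when some $E$-insertions make the maps non-CP; this is harmless because only multilinearity is used.
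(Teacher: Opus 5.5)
Your proof is correct and follows essentially the same route as the paper. Exchangeability makes the error set uniform over $r$-subsets given $N_E=r$ and $N_1=l$, so the conditional mean is $C_r$, and the law of total expectation over $N_E$ then gives the decomposition; your multilinear expansion into auxiliary independent Bernoulli errors and your explicit derivation of the conditional binomial law of $N_E$ make rigorous two points the paper's proof takes for granted.
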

\begin{proof}
The proof is given in App.~\ref{app:proof-ev-cond-N1}.
\end{proof}

The decomposition separates the error-count statistics, encoded by
\(K_{r,l}\), from the circuit response, encoded by \(C_r\), with
\(C_0=\E{O}\). A particularly tractable response arises when \(C_r\) is
exponential in \(r\), as characterized by the following proposition.
\begin{proposition}[Exponential \(C_r\)]\label{prop:exp_Cr}
The following are equivalent:
\begin{enumerate}
  \item[(i)] \(C_r=C_0(1-\rho)^r\) for \(r=0,\dots,V\);
  \item[(ii)] for some pair
  \(0<\epsilon_0<\epsilon_1<1\), equivalently for every such pair,
  \[
  \mathbb E[o\mid N_1=l]
  =C_0(1-\rho\epsilon_1)^l(1-\rho\epsilon_0)^{V-l},
  \qquad l=0,\dots,V;
  \]
  \item[(iii)]
  \(\mathbb E[o]=C_0(1-\rho\epsilon_L)^V\) for every
  \(\epsilon_L\) in some nonempty open interval, equivalently for all
  \(\epsilon_L\).
\end{enumerate}
\end{proposition}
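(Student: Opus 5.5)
The plan is to prove (i)$\Rightarrow$(ii)$\Rightarrow$(iii)$\Rightarrow$(i). Throughout, Prop.~\ref{prop:ev-cond-N1} lets us treat the numbers $C_r$ as fixed. They depend only on the circuit and on $E$, not on $\epsilon_0,\epsilon_1,p_1$. Only the kernel $K_{r,l}$ carries the dependence on the noise parameters.

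\textbf{(i)$\Rightarrow$(ii).} Substitute $C_r=C_0(1-\rho)^r$ into $\mathbb E[o\mid N_1=l]=\sum_r K_{r,l}C_r$. The sum becomes $C_0\,\mathbb E[(1-\rho)^{N_E}\mid N_1=l]$, which is the probability generating function of $N_E$ evaluated at $1-\rho$. Conditioned on $N_1=l$, $N_E$ is a sum of independent $\operatorname{Bin}(l,\epsilon_1)$ and $\operatorname{Bin}(V-l,\epsilon_0)$ variables, so the generating function factorizes. This gives $(1-\epsilon_1+\epsilon_1(1-\rho))^l(1-\epsilon_0+\epsilon_0(1-\rho))^{V-l}$, which is exactly the claimed form for every pair $(\epsilon_0,\epsilon_1)$. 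In particular (i) yields the ``for every pair'' version of (ii), and hence the ``for some pair'' version.

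\textbf{(ii)$\Rightarrow$(iii).} Average over $N_1\sim\operatorname{Bin}(V,p_1)$. The binomial theorem gives $\mathbb E[o]=C_0\bigl(p_1(1-\rho\epsilon_1)+(1-p_1)(1-\rho\epsilon_0)\bigr)^V=C_0(1-\rho\epsilon_L)^V$. Note that (iii) concerns $\mathbb E[o]$ as a function of $\epsilon_L$. Setting $l=0$ and $l=V$ in the syndrome-averaged picture, the unconditional mean equals $\sum_r\binom{V}{r}\epsilon_L^r(1-\epsilon_L)^{V-r}C_r$, because the sites are i.i.d.\ with error probability $\epsilon_L$.

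The formula in (ii) for one pair only pins down $\mathbb E[o]$ on the segment of $\epsilon_L$ values $[\epsilon_0,\epsilon_1]$ traced out as $p_1$ varies. That segment is a nonempty open interval, which is what the ``some interval'' version of (iii) requires.

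\textbf{(iii)$\Rightarrow$(i).} This is the main step. Write $F(x):=\sum_{r=0}^V\binom{V}{r}x^r(1-x)^{V-r}C_r$, the Bernstein polynomial of degree $V$ with coefficients $C_r$. Hypothesis (iii) says $F(x)=C_0(1-\rho x)^V$ on an open interval. Both sides are polynomials of degree at most $V$, so they agree identically in $x$.

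Now expand $C_0(1-\rho x)^V=C_0\bigl((1-x)+(1-\rho)x\bigr)^V=\sum_r\binom{V}{r}x^r(1-x)^{V-r}C_0(1-\rho)^r$. The Bernstein basis $\{x^r(1-x)^{V-r}\}_{r=0}^V$ is linearly independent; one can see this by substituting $u=x/(1-x)$, which turns it into the monomials in $u$. Comparing coefficients therefore gives $C_r=C_0(1-\rho)^r$, which is (i).

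The same identity-theorem argument closes the equivalence loops. Once (i) holds, the first step gives (ii) for every pair, and the second step gives (iii) for all $\epsilon_L$. This shows that the ``some'' and ``every'' versions of (ii) and (iii) are equivalent.

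The only delicate points are the following. First, one must recognize that $\mathbb E[o]$ depends on the noise parameters only through $\epsilon_L$, via the Bernstein form; this holds because each site independently applies the mean channel $\Lambda_L$. Second, one must invoke linear independence of the Bernstein basis.
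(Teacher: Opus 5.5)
Your proof is correct and follows the paper's argument step for step. It uses the generating function of $N_E$ at $1-\rho$ for (i)$\Rightarrow$(ii), binomial averaging over $N_1$ with $p_1$ varied (relying on the conditional means being $p_1$-independent) for (ii)$\Rightarrow$(iii), and coefficient comparison after substituting $x/(1-x)$ for (iii)$\Rightarrow$(i). There are only cosmetic slips: as $p_1$ ranges over $(0,1)$, $\epsilon_L$ traces the open interval $(\epsilon_0,\epsilon_1)$ rather than $[\epsilon_0,\epsilon_1]$, and the phrase ``setting $l=0$ and $l=V$'' should be dropped, since the Bernstein form of $\mathbb E[o]$ follows directly from the sites being i.i.d.\ with error probability $\epsilon_L$, as you go on to say.
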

\begin{proof}
The proof is given in App.~\ref{app:proof_exp_cr}.
\end{proof}

Under the equivalent conditions of Prop.~\ref{prop:exp_Cr}, rescaling the
reference channel according to
\((1-\epsilon_0)I+\epsilon_0E\mapsto
(1-a\epsilon_0)I+a\epsilon_0E\) gives
\[
\mathbb{E}[o]=\E{O}(1-\rho a \epsilon_0)^V,
\]
which supplies the response model used for ZNE.
The factorized conditional response in condition~(ii) similarly supplies
the SOLEM-GR model of Sec.~\ref{sec:model-solem-gr}.
The conditions of Prop.~\ref{prop:exp_Cr} hold exactly in some cases, 
e.g. for global depolarizing noise (see Ex.~\ref{ex:glob_depol}).

Importantly, the exponential form of \(C_r\) is not merely an analytically convenient
assumption: for certain circuit families and error channels \(E\), the
coefficients \(C_r\) are well approximated by an exponential over the relevant range of \(r\), as illustrated in
Fig.~\ref{fig:cr_in_floquet}.

\begin{figure}[t!]
    \begin{centering}
    \includegraphics[width=\columnwidth]{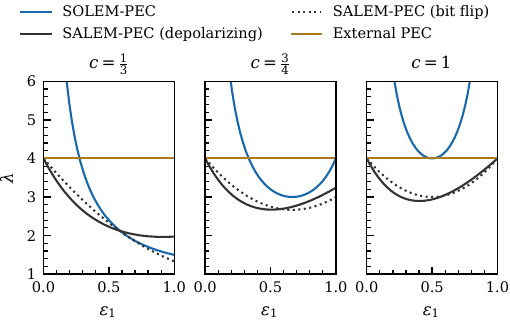}
    \end{centering}
    \caption{
    Comparison of the leading shot-overhead exponents
    \(\lambda\) of SOLEM-PEC, SALEM-PEC, and external PEC as functions of
    \(\epsilon_1\). The SALEM-PEC curves correspond to single-qubit
    depolarizing and bit-flip channels.
    Each panel fixes \(c\) and considers binary syndrome
    partitions satisfying \(p_{1|L}=1-c\epsilon_1\), or equivalently
    \(\epsilon_0/\epsilon_1=c\epsilon_L/(1-p_1)
    =c\epsilon_L+O(\epsilon_Lp_1)\). In general, the relation among these
    parameters is determined by the circuit, noise model, and syndrome
    classifier. The linear relation used here is a simplified model motivated
    by the empirical behavior observed for Steane and surface codes in
    Ref.~\cite{aharonov2025syndromeawaremitigationlogical}. 
    \label{fig:pec-lam}
    }
\end{figure}

\begin{figure*}[t!]
    \centering
    \includegraphics[width=\textwidth]{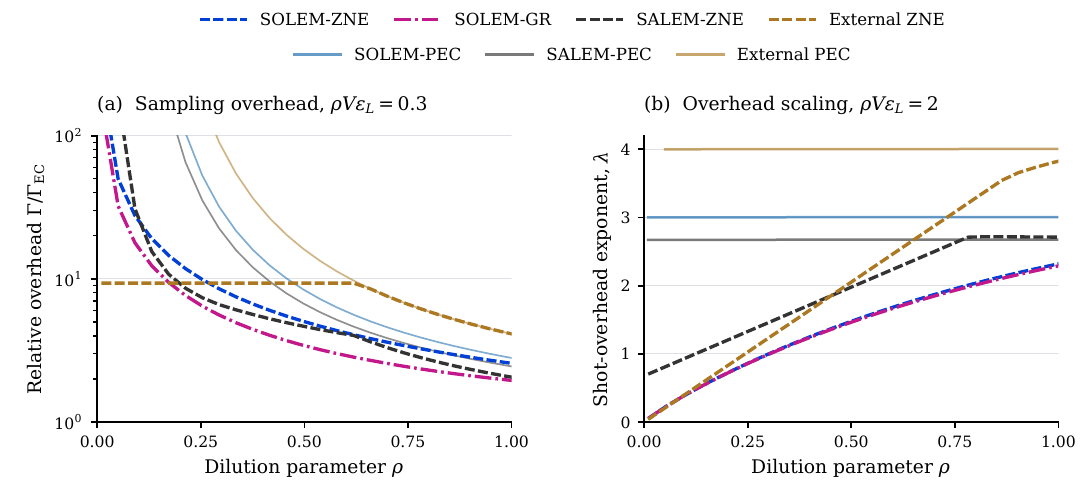}
    \caption{
        \textbf{Sampling overheads and scaling exponents of SOLEM, SALEM, and external error mitigation in the solvable model.}
        The comparison uses the binary parallel-channel model of Sec.~\ref{sec:model} and assumes the exponential
        observable response of Prop.~\ref{prop:exp_Cr}.
        \textbf{(a)} Relative sampling overhead
        $\Gamma/\Gamma_{\mathrm{EC}}$
        versus the dilution parameter $\rho$, at
        fixed $\rho V\epsilon_L=0.3$.
        Thin solid curves show PEC methods; thicker dashed and dash-dotted
        curves show ZNE and SOLEM-GR, respectively.
        \textbf{(b)}
        Effective shot-overhead exponent        $\lambda=\log(\Gamma/\Gamma_{\mathrm{EC}})/(V\epsilon_L)$,
        at
        $\rho V\epsilon_L=2$.
        SOLEM-ZNE and SOLEM-GR nearly coincide in this panel.
        In both panels, $V$ varies with $\rho$ to maintain the indicated
        value of $\rho V\epsilon_L$, and each method's overhead is minimized
        over the sampled syndrome partitions.
        The remaining parameters are fixed at $\langle O\rangle=0.8$ and
        $\epsilon_L=10^{-4}$. For each method and value of $\rho$, the plotted
        value is the smallest
        overhead among 15 sampled binary partitions with
        $p_{1|L} = 1 -\frac{3}4{}\epsilon_1$,
        equally spaced $\epsilon_1$ values from
        $1.10\times10^{-3}$ to $0.7$. 
        SALEM-PEC uses the QP norm for single-qubit
        depolarizing channels given in Table~\ref{tab:overhead-summary}.
        \label{fig:solem-zne-overheads}
    }
\end{figure*}

\subsection{Overheads of SOLEM variants in this model}\label{sec:model-overheads}
Under the assumptions of Prop.~\ref{prop:exp_Cr}, SOLEM-PEC is unbiased, while
the response models used by SOLEM-ZNE and SOLEM-GR are correctly
specified, making the corresponding estimators asymptotically unbiased.
The analysis therefore focuses on their sampling overheads. The leading-order
results are summarized in Tab.~\ref{tab:overhead-summary}.

\subsubsection{\gls{mname} with QP distributions}\label{sec:model_qp_based}
Equation~\eqref{eq:pea_eq_for_qk} admits a closed-form solution in this
model. Because the channels are history independent,
$q^{(j)}_{k_j|k_{1:j-1}}$ depends only on \(k_j\); we denote it by
\(q_{k_j}\). Taking \(\Lambda_{L|0}\) as the reference channel, define
\(\Lambda_a:=(1-a\epsilon_0)I+a\epsilon_0E\) and solve
$q_0\Lambda_{L|0}+q_1\Lambda_{L|1}=\Lambda_a$. The solution is
\begin{equation}\label{eq:q_sol_parallel}
  q_1(a):=\frac{\epsilon_0 / \epsilon_1 (a-1)}{1-\epsilon_0/\epsilon_1},
  \qquad
  q_0(a):=\frac{1-a\epsilon_0/\epsilon_1}
                 {1-\epsilon_0/\epsilon_1}.
\end{equation}
The solution is independent of the structure of \(E\).
For the overhead analysis, assume additionally that \(O^2=I\), so that each measurement outcome is in \(\{-1,1\}\), and restrict \(0<\rho<2\).
We have
\begin{align*}
\sum_{\bsk}\frac{q_{\bsk}^{2}}{p_{\bsk}} 
= & \left(\frac{q_{1}^{2}}{p_{1}}+\frac{q_{0}^{2}}{1-p_{1}}\right)^{V} \\
=& \left[1 +  \frac{\lambda_a \epsilon_L}{(1-p_1)(1 -\epsilon_0/\epsilon_1)^2} \right]^V
\end{align*}
where
\begin{align}
\lambda_{a}  
& := p_{1|L}^{-1} \epsilon_1^{-1} \left( 1-   \frac{a}{a_{EC}} \right)^2,
\end{align}
with $a_{EC}:=\epsilon_L/\epsilon_0$, the amplification corresponding to the EC computation itself,
and $ p_{1|L} = p_1 \epsilon_1 / \epsilon_L$.
The results above are exact within the solvable model. Compact closed-form
expressions follow in the regime
\[
0<\epsilon_0\ll\epsilon_1<1,\qquad p_1\ll1.
\]
By Eq.~\eqref{eq:gamma_eq} the shot overhead corresponding to $a$ is
\begin{align}\label{eq:gamma_a_solem}
\Gamma_{a}  
& = e^{\lambda_a \epsilon_LV }(1 + O(V \lambda_a \epsilon_L[p_1+\epsilon_0/\epsilon_1 + \lambda_a \epsilon_L] )) 
- \E{O}^2_a, 
\end{align}
where $\E{O}_a = \E{O} (1-\rho a \epsilon_0)^V$.
The first term of \(\Gamma_a\) is symmetric about \(a_{EC}\); hence,
increasing and decreasing the noise incur the same reweighting cost. By
contrast, conventional noise amplification by circuit modification does not
itself introduce a sampling cost.

\textit{SOLEM-PEC.}
Complete cancellation corresponds to $a=0$. Equation~\eqref{eq:q_sol_parallel}
then gives
\begin{align*}
q_{1} &= -\epsilon_0 / \epsilon_1 + O((\epsilon_0/\epsilon_1)^2), \\
q_0  &= 1 +\epsilon_0 / \epsilon_1  + O((\epsilon_0/\epsilon_1)^2),
\end{align*}
and hence the blowup rate
\begin{align}\label{eq:solem_qp_lam_lem}
\lambda = \epsilon_1 ^{-1} p^{-1}_{1|L}.
\end{align}

\textit{Two-point SOLEM-ZNE.}
Here we use two distinct values \(0< a_1<a_2\), with correlated
estimators \(\hat{o}_{a_1}\) and \(\hat{o}_{a_2}\) constructed from the
same shots, and covariance matrix
\(\Omega_{j,k} := N \operatorname{Cov}(\widehat o_{a_j},\widehat o_{a_k})\). 
For \(\boldsymbol{e}_{\bst}\in\mathbb R^2\) defined by
\[ 
(\boldsymbol e_{\bst})_i
=\hat{o}_{a_i}-\theta_1(1-\theta_2a_i\epsilon_0)^V,
\] 
the generalized nonlinear least-squares estimator is
\begin{align}
\hat{\bst}(a_1,a_2)
=\arg\min_{\bst}
\boldsymbol e_{\bst}^{\mathsf T}\Omega^{-1}\boldsymbol e_{\bst}.
\end{align}
The amplification pair is then chosen in post-processing to minimize
\(\V[\hat{\theta}_1(a_1,a_2)]\).
As shown in App.~\ref{sec:app_solem_zne}, the shared-shot covariance can only reduce the variance relative to the optimal independent-shot allocation.

Numerical minimization indicates that the variance infimum is approached
as the two amplification factors coalesce at
\begin{align}
a_c=a_{\mathrm{EC}}(\bar{\lambda}-1),
\qquad
\bar{\lambda}=2-\rho\epsilon_1p_{1|L}.
\end{align}
The variance remains finite in this limit, as stated below.
\begin{proposition}[Coalescing-amplification limit]
\label{prop:2pt_zne_limit}
The correlated two-point estimator satisfies
\begin{align}
&\lim_{\substack{a_1,a_2\to a_c\\ a_1<a_2}}
N\V\!\left[\widehat{\theta}_1(a_1,a_2)\right]
\nonumber\\
&\qquad=
e^{\bar{\lambda}\rho V\epsilon_L}
\left[
1+\frac{V\epsilon_L}{p_{1|L}\epsilon_1}
(\bar{\lambda}-1)^2
\right]
-\E{O}^2
\end{align}
to leading order.
\end{proposition}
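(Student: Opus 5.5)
The plan is to reduce the coalescing two-point fit to a \emph{value-plus-derivative} estimator at the single point $a_c$, and then compute its variance exactly within the product structure of the solvable model, expanding to leading order at the end.

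\textbf{Step 1 (the fit is an interpolation).} The model $\theta_1(1-\theta_2 a\epsilon_0)^V$ has two parameters and there are two data points $\widehat o_{a_1},\widehat o_{a_2}$. Generically the residual can therefore be set to zero, so the GLS weighting $\Omega^{-1}$ drops out and $\widehat{\bst}$ solves $\widehat o_{a_i}=\theta_1(1-\theta_2a_i\epsilon_0)^V$ exactly.

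\textbf{Step 2 (passing to the limit).} Write $\widehat o_{a_2}=\widehat o_{a_1}+(a_2-a_1)\,\widehat o\,'_{a_1}+O((a_2-a_1)^2)$, using the derivative estimator of Eq.~\eqref{eq:solem-zne-derivative-estimator}. In the limit $a_1,a_2\to a$, the interpolation conditions become $f=\widehat o_a$ and $f'=\widehat o\,'_a$. The covariance $\Omega$ becomes singular in this limit. However, the estimator of $\theta_1$ is a smooth function of the pair $(\widehat o_a,\widehat o\,'_a)$, so its delta-method variance has a finite limit; this justifies replacing the two-point estimator by the point-and-derivative estimator.

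\textbf{Step 3 (linearizing).} Eliminate $\theta_2$ through $f'/f=-V\epsilon_0\theta_2/(1-\theta_2a\epsilon_0)$. In the regime $a\epsilon_0\theta_2\ll1$, one has $(1-x)^{-V}\simeq e^{Vx}$, which gives
\[
\ln\theta_1\simeq \ln f-a f'/f .
\]
Linearizing, each shot contributes the weight
\[
W=\frac{\theta_1}{f}\Bigl[(1+a f'/f)\,y_{\bsk}(a)-a\,y'_{\bsk}(a)\Bigr]\Big/p_{\bsk},
\qquad N\V[\widehat\theta_1]\simeq \mathbb E[W^2]-\E{O}^2 ,
\]
where $\mathbb E[W^2o^2]=\mathbb E[W^2]$ because $o^2=1$.

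\textbf{Step 4 (exact second moment).} Use the product form $y_{\bsk}(a)=\prod_j q_{k_j}(a)$ from Eq.~\eqref{eq:q_sol_parallel}. Since $q_k$ is affine in $a$, we have $y'_{\bsk}=y_{\bsk}\sum_j q_{k_j}'/q_{k_j}$. Then $\mathbb E[W^2]$ is a second moment of the form $\mathbb E\bigl[(y/p)^2(\alpha+\beta S)^2\bigr]$, where $S=\sum_j q'_{k_j}/q_{k_j}$. By site independence this equals
\[
M^V\Bigl[\alpha^2+2\alpha\beta V m_1+\beta^2\bigl(V m_2+V(V-1)m_1^2\bigr)\Bigr],
\]
with
\[
M=\sum_k \frac{q_k^2}{p_k},\qquad m_1=\sum_k \frac{q_kq_k'}{p_k}\Big/M,\qquad m_2=\sum_k \frac{q_k'^2}{p_k}\Big/M .
\]
Here $f$ and $f'$ are evaluated from $\E{O}(1-\rho a\epsilon_0)^V$, so $a f'/f\simeq-\rho a\epsilon_0 V$.

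\textbf{Step 5 (leading order and optimization).} Insert the explicit $q_k(a)$ and expand in $\epsilon_0/\epsilon_1$, $p_1$ and $\epsilon_L$. Using the $\lambda_a$ formula, $M^V\simeq e^{\lambda_a V\epsilon_L}$. The prefactor $(\theta_1/f)^2\simeq e^{2\rho a\epsilon_0 V}$ adds to this exponent. The exponent $\lambda_a+2\rho a\epsilon_0/\epsilon_L$ is quadratic in $a$; minimizing it should give the stationary point $a_c=a_{EC}(\bar\lambda-1)$ with exponent value $\bar\lambda\rho$. The bracketed polynomial should then reduce to $1+\frac{V\epsilon_L}{p_{1|L}\epsilon_1}(\bar\lambda-1)^2$, where the cross terms combine through $q_1'\simeq\epsilon_0/\epsilon_1$ and $p_1\simeq p_{1|L}\epsilon_L/\epsilon_1$.

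\textbf{Main obstacle.} I expect the main difficulty to be the bookkeeping in Steps 4--5: identifying which $O(V\epsilon_L)$ terms in the polynomial prefactor survive at leading order, and verifying that $a_c$ is indeed where the combined exponent is stationary. The numerics only indicate the location of the infimum, so I will take $a=a_c$ as given and just evaluate the variance there. I will first check the calculation against the $a$-independent PEC limit $\lambda=\epsilon_1^{-1}p_{1|L}^{-1}$ at $a=0$.
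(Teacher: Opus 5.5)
Your proposal is correct, and it reaches the stated formula by a different route from the paper. The paper does not pass to the point-and-derivative estimator inside the proof; it only remarks on that equivalence afterwards. It computes $\Omega$ explicitly (Eq.~\eqref{eq:amplified-estimator-covariance-explicit}) and writes $[J^{-1}\Omega J^{-\mathsf T}]_{11}$ in closed form for $a_1\neq a_2$ (Eq.~\eqref{eq:theta-zero-variance-explicit}), as an optimal-independent-allocation term minus a nonnegative correlation term. It locates the center $a_c$ via the independent-allocation heuristic, and then Taylor-expands that exact two-point expression about $a_1=a_2=a_c$. You instead take the coalescence limit at the level of the estimator: with value and divided difference as coordinates, both the Jacobian and the covariance have nonsingular limits. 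You then linearize $\theta_1\simeq f e^{-af'/f}$ and compute $\mathbb E[W^2]$ exactly from site independence.

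The steps you left hedged do go through. Your bracket equals $(\alpha+V\beta m_1)^2+V\beta^2(m_2-m_1^2)$, with
\[
\alpha+V\beta m_1=1-\tfrac{a}{2}\,\partial_a\log\!\left[(\theta_1/f)^2M^V\right],
\qquad
m_2-m_1^2=\frac{c}{M^2},\quad
c=\frac{(\epsilon_0/\epsilon_1)^2}{(1-\epsilon_0/\epsilon_1)^2\,p_1(1-p_1)},
\]
where $M=1+c(a-a_{EC})^2$ exactly. With $t=a/a_{EC}$ this gives
\[
\alpha+V\beta m_1\simeq 1-\rho V\epsilon_L t+\tfrac{V\epsilon_L}{p_{1|L}\epsilon_1}\,t(1-t).
\]
The stationary point of your exponent $\lambda_a+2\rho t$ is $t=\bar\lambda-1$, i.e.\ $a=a_c$, and its value there is $\rho\bar\lambda$. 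At that point the first square is exactly $1$ to leading order, and $Va_c^2c\simeq\frac{V\epsilon_L}{p_{1|L}\epsilon_1}(\bar\lambda-1)^2$, which reproduces the proposition.

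Your route buys four things. It makes the finiteness of the limit manifest and avoids the delicate cancellation against $1/(a_2-a_1)^2$ in the paper's closed form. It identifies $a_c$ directly as the stationary point of the leading exponent. It also explains the structure of the answer: the ``$1$'' comes from stationarity, and the second term is the variance of the derivative weight.

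The paper's route buys three things in exchange. It gives the variance at arbitrary distinct $a_1,a_2$, not just in the limit. It shows explicitly that shot sharing can only lower the variance, a special case of Obs.~\ref{obs:shared-shot-gls}. It also quantifies the $O(x_1^2+x_2^2)$ corrections away from coalescence.
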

\begin{proof}
See App.~\ref{sec:app_solem_zne}.
\end{proof}
This solution is equivalent to SOLEM-ZNE with a point and first derivative 
at \(a_c\).
Accordingly, two-point SOLEM-ZNE is evaluated at two distinct amplification
factors chosen arbitrarily close to \(a_c\), and we use the coalescing-limit
variance in Prop.~\ref{prop:2pt_zne_limit} as its asymptotic overhead.

\subsubsection{SOLEM-GR}\label{sec:model-solem-gr}
Proposition~\ref{prop:exp_Cr} gives
\[
\mathbb E[o\mid N_1=l]
  = \E{O} (1 - \rho\epsilon_1)^l(1-\rho\epsilon_0)^{V-l}.
\]
This is the regression function induced by the coarse-grained feature
\(N_1=|\bsk|\): individual records of the same weight may have different
conditional responses, while their average has the factorized form above.
Accordingly, 
\begin{align}\label{eq:ftheta}
    Z_{\phi}(\bsk) &= 
    (1- \phi \epsilon_1)^{|\bsk|} (1- \phi \epsilon_0)^{V-|\bsk|},
\end{align}
with \(\mathbb{E}[o \mid Z_{\phi} =z] = \theta z\), 
and \( \theta \in [-1,1], \, \phi \in [0,2 ]\).

\begin{proposition}[Fisher-information bound]\label{prop:gr-fisher}
Consider the conditional Bernoulli model associated with
Eq.~\eqref{eq:ftheta}, with
\(\bst=(\theta ,\phi)\), where
\(\bst_0=(\E{O}, \rho)\).
Treating \(\bst\) as unknown, every regular, locally asymptotically
unbiased estimator \(\widehat{\theta}_{1,N}\) whose information about
\(\bst\) is obtained solely from
\(\{(|\bsk_i|,o_i)\}_{i=1}^{N}\) satisfies
\[
\liminf_{N\to\infty}
N\V_{\bst_0}\!\left(\widehat{\theta}_{1,N}\right)
\geq \Gamma_{\rm F},
\]
where, to leading order in \(p_1\), \(\epsilon_0/\epsilon_1\), and the
squared conditional means,
\begin{align*}
\Gamma_{ \rm F}
={}&
e^{\bar{\lambda}\rho V\epsilon_L}
\left[
1+\frac{V\epsilon_L}{p_{1|L}\epsilon_1}
(\bar{\lambda}-1)^2
\right]
\\
&-\E{O}^2
e^{(2-\bar{\lambda})\rho V\epsilon_L
(2-\rho\epsilon_1)^2}F_0,
\end{align*}
where \(F_0\) is given in Eq.~\eqref{eq:gr-F0}.

Under the standard regularity and identifiability conditions, the
maximum-likelihood estimator asymptotically attains the exact bound.
The complete expression is given in App.~\ref{sec:fisher}.
\end{proposition}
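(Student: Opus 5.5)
The bound follows from standard local asymptotic theory (Hájek–Le Cam). The observations $(|\bsk_i|,o_i)$ are i.i.d. with an explicit parametric law. $N_1=|\bsk|\sim\mathrm{Bin}(V,p_1)$ does not depend on $\bst$, and given $N_1=l$ the outcome $o\in\{-1,1\}$ is Bernoulli with mean
\[
\mu_l(\bst)=\theta\,(1-\phi\epsilon_1)^l(1-\phi\epsilon_0)^{V-l}.
\]
The per-shot Fisher information is therefore
\[
\mathcal I(\bst)=\sum_{l=0}^V \binom{V}{l}p_1^l(1-p_1)^{V-l}\,\frac{\nabla_{\bst}\mu_l\,\nabla_{\bst}\mu_l^{\mathsf T}}{1-\mu_l^2},
\]
because the Bernoulli information for the mean $\mu$ is $1/(1-\mu^2)$. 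For regular, locally asymptotically unbiased estimators the convolution/local asymptotic minimax theorem gives $\liminf N\V(\widehat\theta_{1,N})\ge[\mathcal I(\bst_0)^{-1}]_{11}$. Under regularity and identifiability ($0<\epsilon_0<\epsilon_1$ makes $\theta,\phi$ identifiable from two distinct $l$), the MLE is asymptotically efficient and attains this bound. So the exact statement reduces to computing $\Gamma_{\rm F}=[\mathcal I^{-1}]_{11}=\mathcal I_{\phi\phi}/(\mathcal I_{\theta\theta}\mathcal I_{\phi\phi}-\mathcal I_{\theta\phi}^2)$.

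The gradients are $\partial_\theta\mu_l=\mu_l/\theta$ and $\partial_\phi\mu_l=-\mu_l\,u_l$, where
\[
u_l=\frac{l\epsilon_1}{1-\phi\epsilon_1}+\frac{(V-l)\epsilon_0}{1-\phi\epsilon_0}.
\]
I expand $1/(1-\mu_l^2)=1+\mu_l^2+O(\mu_l^4)$, which is the ``leading order in the squared conditional means.'' At zeroth order each entry of $\mathcal I$ is a binomial moment of $\mu_l^2\cdot\{1,u_l,u_l^2\}$. Since $\mu_l^2$ is geometric in $l$, these moments are computed via the generating function
\[
\sum_l\binom{V}{l}p_1^l(1-p_1)^{V-l}x^l y^{V-l}=(p_1x+(1-p_1)y)^V,
\]
with $u_l$ and $u_l^2$ obtained by differentiating in $x$. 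This yields a tilted binomial with effective success probability $\tilde p=p_1x/(p_1x+(1-p_1)y)$, where $x=(1-\rho\epsilon_1)^2$ and $y=(1-\rho\epsilon_0)^2$. The $2\times2$ inverse then reduces to $\theta^{-2}$ times the inverse of the normalization, times the factor $1+(\text{mean } u)^2/\mathrm{Var}(u)$ under the tilted law. Expanding to leading order in $p_1$ and $\epsilon_0/\epsilon_1$, with $V\epsilon_L$ and $\rho$ fixed, should reproduce $e^{\bar\lambda\rho V\epsilon_L}$ as the inverse tilt normalization. Here $\bar\lambda=2-\rho\epsilon_1p_{1|L}$ arises from $\log(p_1x+(1-p_1)y)\approx-2\rho\epsilon_L+p_1\rho^2\epsilon_1^2$. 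The factor $[1+V\epsilon_L(\bar\lambda-1)^2/(p_{1|L}\epsilon_1)]$ comes from mean$^2$/variance of $N_1$ under the tilt, since the variance is Poisson-like, $\approx Vp_1x$. As a consistency check, this part should match Prop.~\ref{prop:2pt_zne_limit}.

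The first-order correction from the $\mu_l^2$ term in $1/(1-\mu_l^2)$ gives the subtracted $\E{O}^2$ term. It has the same structure but with a doubly tilted law ($x^2,y^2$), producing the factor $e^{(2-\bar\lambda)\rho V\epsilon_L(2-\rho\epsilon_1)^2}$ and a ratio of tilted moments that I would collect into $F_0$. I would then propagate this through the $2\times2$ inverse by first-order perturbation, $[(\mathcal I_0+\mathcal I_1)^{-1}]_{11}\approx[\mathcal I_0^{-1}]_{11}-[\mathcal I_0^{-1}\mathcal I_1\mathcal I_0^{-1}]_{11}$.

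The main obstacle is this last step: keeping consistent orders in $p_1$, $\epsilon_0/\epsilon_1$ and $\rho\epsilon_1$ while the Schur complement $\mathcal I_{\theta\theta}\mathcal I_{\phi\phi}-\mathcal I_{\theta\phi}^2$ nearly cancels. This near-cancellation is a variance of $u$, and is small. I would handle it by writing every entry exactly in terms of tilted-binomial cumulants before expanding, so that the cancellation happens algebraically, and only then take the asymptotic limits to identify $F_0$.
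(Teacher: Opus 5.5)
Your proposal is correct and is essentially the paper's own argument. Your per-shot information $\sum_l w_l\,\nabla\mu_l\nabla\mu_l^{\mathsf T}/(1-\mu_l^2)$, with $w_l=\binom{V}{l}p_1^l(1-p_1)^{V-l}$, is exactly the paper's matrix with entries $S_0$, $-\langle O\rangle S_1$, $\langle O\rangle^2 S_2$, and the paper likewise takes $\Gamma_{\rm F}=(S_0-S_1^2/S_2)^{-1}$, attained by the MLE, then expands $1/(1-\langle O\rangle^2 Z_\rho(l)^2)\approx 1+\langle O\rangle^2 Z_\rho(l)^2$ in $p_1$ and $\epsilon_0/\epsilon_1$. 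Your singly and doubly tilted binomial bookkeeping, together with the first-order perturbation of the inverse, spells out the substitution the paper only asserts, and it does reproduce both the leading term and the $e^{(2-\bar\lambda)\rho V\epsilon_L(2-\rho\epsilon_1)^2}F_0$ correction.

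One small slip: at zeroth order the bound is $S_0^{-1}\,\overline{u^2}/\mathrm{Var}(u)$ with $S_0=(p_1x+(1-p_1)y)^V$, so there is no $\theta^{-2}$ prefactor. If you normalize by $\sum_l w_l\mu_l^2$ instead, the prefactor becomes $\theta^{+2}$.
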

When the conditional means are uniformly small, the asymptotic variance
of unweighted least squares agrees with the Fisher-information bound
through quadratic order; their difference is fourth order in
\(\max_{0\leq l\leq V}
|\E{O}(1-\rho\epsilon_0)^{V-l}(1-\rho\epsilon_1)^l|\)
(see App.~\ref{sec:app_solem_gr}).

\subsection{Comparison of shot overheads}
As seen in Tab.~\ref{tab:overhead-summary}, the overheads of all SOLEM
variants depend explicitly on the mitigated volume \(V\), rather than
only through the effective decay strength \(\rho V\epsilon_L\), as is
the case for ExtZNE. The SALEM-ZNE construction considered here has a
similar dependence because it rejects shots containing nonzero
syndrome outcomes, contributing a factor
\(\exp(p_1V)=\exp(p_{1|L}\epsilon_1^{-1}\epsilon_LV)\) to the sampling
overhead.

Figure~\ref{fig:pec-lam} compares the sampling overheads of SOLEM-PEC,
SALEM-PEC, and ExtPEC. SOLEM-PEC can outperform SALEM-PEC,
depending on the relation between \(\epsilon_1\) and \(p_{1|L}\).

A broader comparison, including SOLEM-ZNE and SOLEM-GR, is shown in
Fig.~\ref{fig:solem-zne-overheads}. For ExtZNE and SALEM-ZNE, the
optimization permits either noise amplification or noise reduction,
with the optimal choice depending on \(\rho\). Because the SOLEM
overheads depend explicitly on \(V\), ExtZNE performs best at small
\(\rho\) in the moderate-decay regime shown in
Fig.~\ref{fig:solem-zne-overheads}(a). In the stronger-decay regime of
Fig.~\ref{fig:solem-zne-overheads}(b), SOLEM-ZNE and SOLEM-GR have
nearly identical overheads and outperform the other methods considered.

\subsection{Remark on the case of nonparallel conditional errors}
\label{sec:non-parallel-Es}
The error-count decomposition extends to nonparallel conditional error
channels of the form
\[
\Lambda_{L|k}=(1-\epsilon_k)I+\epsilon_kE_k,
\]
with
\(k\in\{0,1\}\). In this case, one can show that
\[
\mathbb E[o\mid N_1=l]
=
\sum_{r_1=0}^{l}\sum_{r_0=0}^{V-l}
K_{r_0,r_1,l}C_{r_0,r_1},
\]
where
\(K_{r_0,r_1,l}:=
\mathbb P(N_{E_0}=r_0,N_{E_1}=r_1\mid N_1=l)\)
contains the joint error-count statistics, and
\[
C_{r_0,r_1}
:=
\frac{1}{\binom{V}{r_1}\binom{V-r_1}{r_0}}
\sum_{\substack{
J_0,J_1\subseteq[V]\\
J_0\cap J_1=\varnothing\\
|J_0|=r_0,\ |J_1|=r_1
}}
o_{J_0,J_1}
\]
is the circuit response averaged over the locations of the two error
types. Here, \(o_{J_0,J_1}\) is the expectation value when \(E_0\) and
\(E_1\) act at the locations in \(J_0\) and \(J_1\), respectively.

An analogue of the exponential-\(C_r\)
condition is
\(C_{r_0,r_1}=\E{O}(1-\rho_0)^{r_0}(1-\rho_1)^{r_1}\), 
which implies
\[
\mathbb E[o\mid N_1=l]
=
\E{O}(1-\rho_0\epsilon_0)^{V-l}
(1-\rho_1\epsilon_1)^l.
\]
At fixed \(V\), however, the three parameters
\(\E{O},\rho_0,\rho_1\) are not jointly identifiable from this response.
An additional relation between \(\rho_0\) and \(\rho_1\) is therefore
required to fit the ideal expectation value.

\begin{figure*}[t!]
    \begin{centering}
    \includegraphics[width=0.8\textwidth]{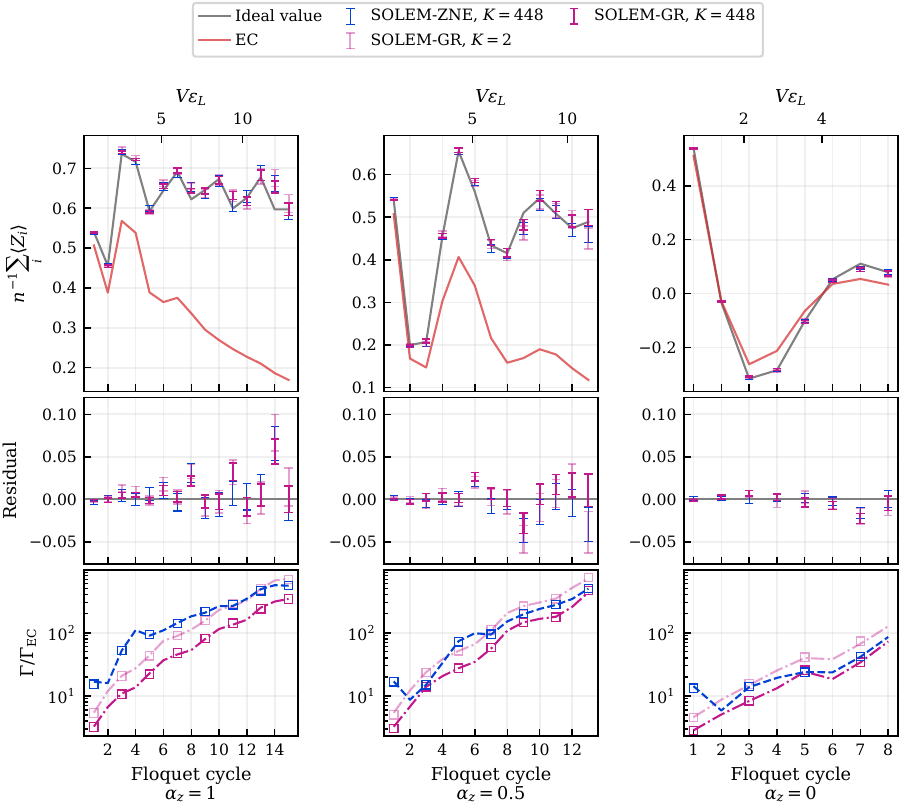}
    \end{centering}
    \caption{
    Results of the effective logical-channel Floquet simulations.
    We set \(\alpha_{zz}=\pi/6\) and \(\alpha_x=1\); the columns correspond
    to different values of \(\alpha_z\).
    For \(\alpha_z=0\), the \(Z\)-rotation layer is omitted, so each Floquet
    cycle contains five rather than six layers.
    The mean logical infidelity is
    \(\epsilon_L=8.9\times10^{-3}\), and each circuit is sampled using
    \(N=10^5\) shots.
    In the bottom row, lines show the analytical asymptotic-variance
    predictions, while open squares show bootstrap estimates obtained from
    \(2000\) resamples.
    In this simulation, the amplification of noise is exact so that biases of SOLEM-ZNE
    are due only to the misspecification of the response function \(f_{\bst}(a)\).
    For most cycle numbers, the SOLEM-GR and SOLEM-ZNE estimates agree with
    the ideal value within statistical uncertainty. When visible deviations
    occur, they are strongly correlated between the two methods, and we do
    not resolve an additional bias specific to the SOLEM-GR heuristic.
    For SOLEM-ZNE, we select the pair of amplification factors that minimizes
    the predicted estimation variance and hold this pair fixed throughout
    the bootstrap procedure. The reported bootstrap variance is therefore
    conditional on the selected pair and does not include its selection
    uncertainty.
    \label{fig:flouqet_sim}    
    }
\end{figure*}

\section{Effective logical-channel simulation of non-Clifford circuits}
\label{Sec: effective logical simulation}
The simulations in this section study the bias of SOLEM-GR in
non-Clifford circuits and compare it with the standard ZNE heuristic.
They are performed directly at the logical level, so each simulated qubit
represents a logical qubit. We assume that the conditional logical errors
are independent across space and time, use the same set of conditional
logical channels at every qubit and layer, and neglect their dependence on
incoming logical errors.

We consider a Floquet circuit defined on an open \(4\times4\) square
lattice of \(n=16\) qubits.  One Floquet cycle consists of uniform
single-qubit \(X\)- and \(Z\)-rotation layers, followed by
nearest-neighbor fractional-angle \(ZZ\) rotations on the four
edge-colored bond layers. 
The evolution is generated by
\begin{align}
U_{F}
={}&
\left(\prod_{r=1}^{4}e^{-i\frac{\alpha_{zz}}{2} H_r }\right)
e^{-i \frac{\alpha_z}{2}  H_Z}
e^{-i\frac{\alpha_x}{2} H_X},
\label{eq:toy-floquet-step}\\
H_X
={}&\sum_{i=1}^{n}X_i,\qquad
H_Z=\sum_{i=1}^{n}Z_i,\qquad
H_r=\sum_{\langle i,j\rangle\in E_r}Z_iZ_j .
\nonumber
\end{align}
Here \(E_1,E_2,E_3,E_4\) are the horizontal-even, horizontal-odd,
vertical-even, and vertical-odd edge sets. 
The unitary \(U_{F}\)
therefore defines a Floquet mixed-field Ising model. We use
\(\alpha_{zz}=\pi/6\) and \(\alpha_x=1\), and variable \(\alpha_z\).

We initialize the state
\(\bigotimes_{i=1}^{n}\ket{0}_i\) and, after the final Floquet cycle,
measure the magnetization
\[
M =\frac{1}{n}\sum_{i=1}^{n} Z_i.
\]
After every circuit layer, each qubit independently acquires a syndrome
label sampled from the P2LC distribution and a Pauli error sampled from
the corresponding syndrome-conditioned logical channel. 
 
Throughout the circuit, we use the same syndrome distribution \(p_s\)
and the same syndrome-conditioned one-qubit Pauli channels
\(\Lambda_{L|s}\), obtained from the history-zero Steane-code P2LC.
There are therefore
\(V=nD\) locations at which one-qubit syndrome-conditioned Pauli channels
act, where \(D\) is the total number of circuit layers.

\textit{SOLEM-ZNE.}
We use the fine-grained syndrome-conditioned channels
\(\Lambda_{L|s}\).
For each amplification factor \(a\), we obtain the \emph{exact} coefficients \(q_s(a)\)
using the analogue of the optimization in
Eq.~\eqref{eq:qp_cost_func} to obtain the channel 
\(\Lambda(a)=a\Lambda_L+(1-a)I\). 
Here, the reference channel is the syndrome-averaged logical error channel, which takes the simple form
\[
\Lambda_L=\sum_s p_s\Lambda_{L|s}.
\]
We then
implement two-point SOLEM-ZNE using the resulting correlated estimators and the 
response function \( f_{\bst}(a) =\theta_1 \exp(-\theta_2 a)\).

\textit{SOLEM-GR.}
We consider two syndrome partitions.
The first is the fine-grained partition.
The second is a binary partition constructed such that the channels
conditioned on \(k=0\) and \(k=1\) are ``parallel'' in the sense of
Eq.~\eqref{eq:parallel}.
This binary partition realizes the channel structure assumed in our solvable
model, allowing us to test the applicability of its predictions.
For the regression, we use
\[
Z_{\phi}(\bsk)=\exp \left( -\phi \sum_{j=1}^{V}\epsilon_{L|k_j} \right),
\]
where we assume \( \mathbb{E}[o\mid Z_{\phi} =z ] \simeq \theta z\), 
with \(-1\leq\theta \leq1\) and \(\phi>0\). We fit the parameters by
least squares.

\begin{figure}[ht!]
    \begin{centering}
    \includegraphics[width=\columnwidth]{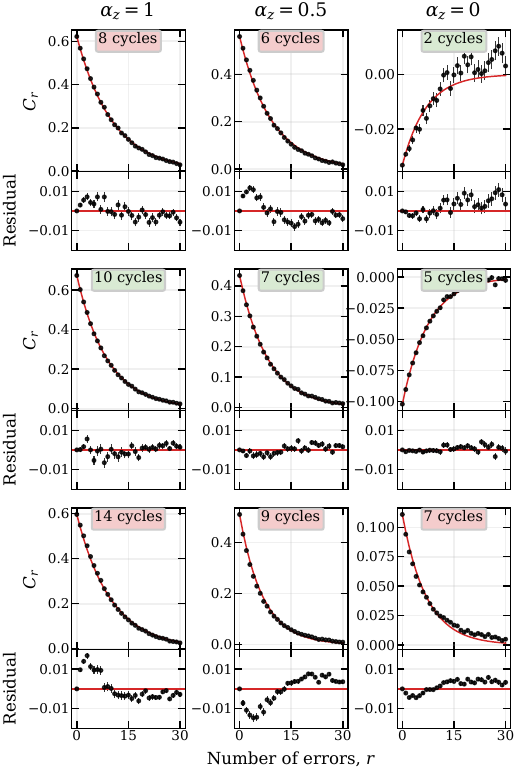}
    \end{centering}
    \caption{
    \(C_r\) coefficients for the Floquet circuits of
    Eq.~\eqref{eq:toy-floquet-step}, with error
    \(E=0.418X+0.176Y+0.406Z\) and observable
    \(n^{-1}\sum_{i=1}^{n}Z_i\).
    Here, \(E\) is the normalized non-identity part of the logical channels in
    the binary ``parallel'' partition.
    The columns correspond to different values of \(\alpha_z\).
    Green (red) shading indicates that the corresponding estimator in
    Fig.~\ref{fig:flouqet_sim} deviates from the ideal value by less (more)
    than \(2\sigma\).
    We estimate \(C_r\) by uniformly sampling sets of \(r\) error locations and
    drawing the Pauli error at each location according to \(E\).
    We then perform a weighted least-squares fit to
    \(C_r=\E{O}(1-\rho)^r\) to obtain \(\hat{\rho}\).
    The upper panels show the estimated \(C_r\) values and their exponential
    fits; the lower panels show the residuals
    \(C_r-\E{O}(1-\hat{\rho})^r\).
    For the selected circuits, statistically resolved deviations of SOLEM-GR
    with the binary partition occur together with systematic departures of
    \(C_r\) from the exponential model.
    }
    \label{fig:cr_in_floquet}
\end{figure}

Fig.~\ref{fig:flouqet_sim} shows the SOLEM-GR and SOLEM-ZNE results for 
various values of \(\alpha_z\). The deviations of the GR and ZNE estimators
from the ideal value are strongly correlated in this setup, as shown by the
residual panels.
SOLEM-GR with \(K=448\) consistently has lower overhead than SOLEM-ZNE.
As noted in Sec.~\ref{sec:model}, refining the ZNE construction may reduce
this gap; we do not pursue such refinements here.
For the \(K=2\) parallel partition, 
we estimate the coefficients \(C_r\)
introduced in Sec.~\ref{sec:model}, Prop.~\ref{prop:exp_Cr}. 
Interestingly, we find that for most cycle numbers the coefficients 
\(C_r\) are approximated very well by an exponential decay. 
Fig.~\ref{fig:cr_in_floquet} shows the corresponding exponential fits and
their residuals. For the selected cycle counts, statistically resolved
estimator deviations coincide with systematic departures of \(C_r\) from the
exponential response, consistent with Prop.~\ref{prop:exp_Cr}.

\clearpage
\onecolumngrid
\appendix

\section{Proofs and derivations for Sec.~\ref{sec:solem-qp-based} }

\subsection{Derivative estimation}\label{sec:deriv}
As discussed in Sec.~\ref{sec:solem-zne}, the same measured records
provide estimators of \(\E{O}_a\) at multiple values of \(a\). They can
therefore also be used to estimate derivatives with respect to \(a\).
For example, the forward-difference estimator
\begin{align}\label{eq:solem-zne-finite-difference}
\widehat{O}^{(l)}_{a_0,h}
:=
\frac{1}{h^l}\sum_{m=0}^{l}
(-1)^{l-m}\binom{l}{m}\widehat{O}_{a_0+mh}
\end{align}
converges to
\(\left.d^l\widehat{O}_a/da^l\right|_{a=a_0}\) as \(h\to0\).
Importantly, all terms in Eq.~\eqref{eq:solem-zne-finite-difference}
are evaluated on the same shots and are therefore correlated. The
derivative can equivalently be obtained directly, without introducing a
finite-difference step size, by differentiating the shot weights. Because
the point estimators use the same records, the finite-difference
combination converges shot by shot to this direct estimator.

Define
\begin{align}
d_{\bsk}^{(j)}:=y_{\bsk}^{(j)}-q_{\bsk}^{(j)},
\qquad
y_{\bsk}^{(j)}(a)=q_{\bsk}^{(j)}+a d_{\bsk}^{(j)}.
\end{align}
The global coefficient in Eq.~\eqref{eq:solem-zne-estimator} is the
degree-\(D\) polynomial
\begin{align}
y_{\bsk}(a)
=\prod_{j=1}^{D}\left(q_{\bsk}^{(j)}+a d_{\bsk}^{(j)}\right).
\end{align}
Its normalized \(l\)th derivative is
\begin{align}\label{eq:solem-zne-weight-derivative}
\frac{1}{l!}\frac{d^l y_{\bsk}(a)}{da^l}
=
\sum_{1\leq j_1<\cdots<j_l\leq D}
\prod_{j\in\{j_1,\ldots,j_l\}}d_{\bsk}^{(j)}
\prod_{j\notin\{j_1,\ldots,j_l\}}y_{\bsk}^{(j)}(a).
\end{align}
Consequently,
\begin{align}\label{eq:def_Dl}
\mathsf D_l(a_0)
:={}&
\frac{1}{N}\sum_{i=1}^{N}
\frac{o_i}{p_{\bsk_i}}
\left.
\frac{1}{l!}\frac{d^l y_{\bsk_i}(a)}{da^l}
\right|_{a=a_0}
\end{align}
is unbiased for the normalized derivative
\begin{align}
\mathbb E[\mathsf D_l(a_0)]
=
\left.
\frac{1}{l!}\frac{d^l\E{O}_a}{da^l}
\right|_{a=a_0}.
\end{align}
Equation~\eqref{eq:solem-zne-weight-derivative} need not be evaluated
by explicitly summing over the \(\binom{D}{l}\) subsets. The
coefficients through order \(r\) in
\(y_{\bsk}(a_0+t)=\prod_j[y_{\bsk}^{(j)}(a_0)+t d_{\bsk}^{(j)}]\)
can be accumulated by truncated polynomial multiplication in
\(O(Dr)\) time and \(O(r)\) memory per shot.

\subsection{Shared-shot GLS}
\label{sec:zne-shared-shot-gls}

\begin{observation}\label{obs:shared-shot-gls}
Suppose that the noise response belongs to the fitted family,
\(\E{O}_a=\phi_{\boldsymbol\theta_*}(a)\) for all admissible \(a\).
Choose amplification points \(a_1,\ldots,a_m\) that identify
\(\boldsymbol\theta_*\), and assume the usual regularity conditions for
consistent, asymptotically normal GLS. Both shared-shot and independently
allocated GLS then converge to \(\boldsymbol\theta_*\); only their
asymptotic covariance differs. Let \(J\) be the Jacobian at
\(\boldsymbol\theta_*\), and let
\(\Omega_{\mathrm{sh}}\) be the asymptotic covariance coefficient when
every shot is used at every point. For an independent allocation with
fractions \(w_r>0\), \(\sum_r w_r=1\), define
\(\Omega_{\mathrm{alloc}}(\boldsymbol w):=
\operatorname{diag}((\Omega_{\mathrm{sh}})_{rr}/w_r)_{r=1}^m\).
If \(\Omega_{\mathrm{sh}}\) is positive definite and \(J\) has full
column rank, the asymptotic GLS covariance coefficients obey
\[
\left(J^{\mathsf T}\Omega_{\mathrm{sh}}^{-1}J\right)^{-1}
\preceq
\left[
J^{\mathsf T}\Omega_{\mathrm{alloc}}(\boldsymbol w)^{-1}J
\right]^{-1}.
\]
Thus, shared-shot GLS has no larger asymptotic variance in any
parameter direction than any independent shot allocation.
\end{observation}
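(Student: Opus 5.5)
The claim is a matrix-ordering statement between two GLS sandwich covariances. The plan is to reduce it to a single ordering of the residual covariances, \(\Omega_{\mathrm{sh}}\preceq\Omega_{\mathrm{alloc}}(\boldsymbol w)\). Inversion and congruence are monotone on positive definite matrices, so this one ordering would give the statement directly.

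\textbf{Step 1 (asymptotic form of both estimators).} Under the stated regularity and identifiability conditions, both estimators are consistent for \(\boldsymbol\theta_*\). The standard linearization of nonlinear GLS around \(\boldsymbol\theta_*\) then gives \(\sqrt N(\widehat{\boldsymbol\theta}-\boldsymbol\theta_*)\to\mathcal N\bigl(0,(J^{\mathsf T}\Omega^{-1}J)^{-1}\bigr)\), where \(\Omega\) is the asymptotic covariance coefficient of the vector of point estimators \((\widehat O_{a_r})_r\).

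The two cases differ only in \(\Omega\).
\begin{itemize}
\item Shared shots: each \(\widehat O_{a_r}\) is an average over all \(N\) shots of the weights \(y_{\bsk}(a_r)o/p_{\bsk}\) from Eq.~\eqref{eq:solem-zne-estimator}. Its covariance coefficient is therefore \(\Omega_{\mathrm{sh}}\).
\item Independent allocation: point \(r\) uses \(w_rN\) disjoint shots. Its variance coefficient is \((\Omega_{\mathrm{sh}})_{rr}/w_r\), and distinct points are independent, which yields the diagonal \(\Omega_{\mathrm{alloc}}(\boldsymbol w)\).
\end{itemize}
In both cases the weighting matrix in GLS is the matching inverse covariance. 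This is the case in which the sandwich collapses to \((J^{\mathsf T}\Omega^{-1}J)^{-1}\).

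\textbf{Step 2 (key inequality \(\Omega_{\mathrm{sh}}\preceq\Omega_{\mathrm{alloc}}(\boldsymbol w)\)).} Write \(D=\operatorname{diag}((\Omega_{\mathrm{sh}})_{rr})\) and \(W=\operatorname{diag}(w_r)\), so that \(\Omega_{\mathrm{alloc}}=DW^{-1}\). Set \(R=D^{-1/2}\Omega_{\mathrm{sh}}D^{-1/2}\), the correlation matrix, which is positive definite by assumption and has unit diagonal. The claim is then equivalent to \(R\preceq W^{-1}\).

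For any vector \(x\), Cauchy--Schwarz on the PSD form gives \(|R_{rs}|\le1\), so
\[
x^{\mathsf T}Rx\le\Bigl(\sum_r|x_r|\Bigr)^2=\Bigl(\sum_r\sqrt{w_r}\,\frac{|x_r|}{\sqrt{w_r}}\Bigr)^2\le\Bigl(\sum_r w_r\Bigr)\sum_r\frac{x_r^2}{w_r}=x^{\mathsf T}W^{-1}x .
\]
The middle step is a second application of Cauchy--Schwarz, and the last equality uses \(\sum_r w_r=1\). This proves \(\Omega_{\mathrm{sh}}\preceq\Omega_{\mathrm{alloc}}(\boldsymbol w)\).

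Intuitively, splitting shots can only lose information relative to reusing every shot at every point. I expect this inequality to be the only nontrivial step, and its one potential pitfall is remembering to use the normalization \(\sum_r w_r=1\) through Cauchy--Schwarz.

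\textbf{Step 3 (propagate the ordering).} Since \(A\preceq B\) with both positive definite implies \(B^{-1}\preceq A^{-1}\), we get \(\Omega_{\mathrm{alloc}}^{-1}\preceq\Omega_{\mathrm{sh}}^{-1}\). Congruence by \(J\) preserves the ordering, giving \(J^{\mathsf T}\Omega_{\mathrm{alloc}}^{-1}J\preceq J^{\mathsf T}\Omega_{\mathrm{sh}}^{-1}J\). Both sides are positive definite because \(J\) has full column rank. Inverting once more gives the stated inequality.

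The phrase \emph{no larger variance in any parameter direction} is the quadratic-form reading of \(\preceq\), evaluated on an arbitrary vector \(c\).
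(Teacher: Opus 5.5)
Your proof is correct. Its skeleton matches the paper's: order \(\Omega_{\mathrm{sh}}\) against \(\Omega_{\mathrm{alloc}}(\boldsymbol w)\), then carry that ordering through congruence by \(J\) and inversion. The difference is which side of the key inequality you prove.

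The paper works directly with precision matrices. Cauchy--Schwarz gives \(x_r^2\le (x^{\mathsf T}\Omega_{\mathrm{sh}}^{-1}x)\,(\Omega_{\mathrm{sh}})_{rr}\), i.e.\ \(\Omega_{\mathrm{sh}}^{-1}\succeq \boldsymbol e_r\boldsymbol e_r^{\mathsf T}/(\Omega_{\mathrm{sh}})_{rr}\) for each \(r\). Averaging these with weights \(w_r\) yields exactly \(\Omega_{\mathrm{alloc}}(\boldsymbol w)^{-1}\), and a single inversion after congruence finishes the proof. You instead prove the equivalent covariance-side ordering \(\Omega_{\mathrm{sh}}\preceq\Omega_{\mathrm{alloc}}(\boldsymbol w)\). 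You pass to the correlation matrix, bound \(x^{\mathsf T}Rx\le(\sum_r|x_r|)^2\) using only \(|R_{rs}|\le1\), and close with a weighted Cauchy--Schwarz, at the cost of one extra inversion.

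The paper's route is shorter and has a clean information reading. The shared-shot information dominates the information carried by any single amplification point. An independent allocation can only realize a convex mixture of those single-point informations. This is natural because \(J^{\mathsf T}\Omega^{-1}J\) is the object that enters GLS. Your route uses less structure in the key step. It needs only a unit diagonal and bounded correlations, so \(\Omega_{\mathrm{sh}}\preceq\Omega_{\mathrm{alloc}}(\boldsymbol w)\) holds for any positive semidefinite \(\Omega_{\mathrm{sh}}\) with positive diagonal. Positive definiteness enters only where you invert.
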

\begin{proof}
Let \(\boldsymbol e_r\) be the \(r\)th standard basis vector.
Cauchy--Schwarz gives
\(\Omega_{\mathrm{sh}}^{-1}\succeq
\boldsymbol e_r\boldsymbol e_r^{\mathsf T}/
(\Omega_{\mathrm{sh}})_{rr}\) for every \(r\). Taking the weighted
average of these inequalities yields
\[
\Omega_{\mathrm{sh}}^{-1}
\succeq
\sum_{r=1}^{m}
w_r\frac{\boldsymbol e_r\boldsymbol e_r^{\mathsf T}}
{(\Omega_{\mathrm{sh}})_{rr}}
=\Omega_{\mathrm{alloc}}(\boldsymbol w)^{-1}.
\]
Congruence with \(J\) and inversion, which reverses the
positive-semidefinite order, give the claimed inequality.
\end{proof}

\section{Derivations and proofs for the analytical model}\label{app:model}
\subsection{Proof of Prop.~\ref{prop:ev-cond-N1}} \label{app:proof-ev-cond-N1}
\begin{proof}
Let \(\mathcal J_E\subseteq[V]\) denote the random set of locations at
which \(E\) occurs. Conditioned on \(N_E=r\) and \(N_1=l\), exchangeability
of the syndrome-and-error process implies that \(\mathcal J_E\) is uniformly
distributed over the subsets of \([V]\) of size \(r\). Hence, for every
\(J\subseteq[V]\) with \(|J|=r\),
\[
\mathbb P(\mathcal J_E=J\mid N_E=r,N_1=l)
=\frac{1}{\binom Vr}.
\]
Therefore,
\begin{align*}
\mathbb E[o\mid N_E=r,N_1=l]
&=\sum_{\substack{J\subseteq[V]\\|J|=r}}
\mathbb P(\mathcal J_E=J\mid N_E=r,N_1=l)\,o_J
\\
&=\frac{1}{\binom Vr}
\sum_{\substack{J\subseteq[V]\\|J|=r}}o_J
=C_r.
\end{align*}
The law of total expectation then gives
\begin{align*}
\mathbb E[o\mid N_1=l]
&=\sum_{r=0}^{V}
\mathbb P(N_E=r\mid N_1=l)\,
\mathbb E[o\mid N_E=r,N_1=l]
\\
&=\sum_{r=0}^{V}K_{r,l}C_r,
\end{align*}
where \(K_{r,l}:=\mathbb P(N_E=r\mid N_1=l)\).
\end{proof}

For later use, 
the probability-generating function is
\begin{align}\label{eq:K-gen-func}
  \mathbb E[z^{N_E}\mid N_1=l]
  =\sum_{r=0}^{V}K_{r,l}z^r
  =(1-\epsilon_1+\epsilon_1z)^l
   (1-\epsilon_0+\epsilon_0z)^{V-l}.
\end{align}

\subsection{Proof of Prop.~\ref{prop:exp_Cr}}\label{app:proof_exp_cr}
\begin{proof}
\emph{(i) \(\implies\) (ii).}
Setting \(z=1-\rho\) in Eq.~\eqref{eq:K-gen-func} gives
\[
\mathbb E[o\mid N_1=l]
=C_0\sum_{r=0}^{V}K_{r,l}(1-\rho)^r
=C_0(1-\rho\epsilon_1)^l(1-\rho\epsilon_0)^{V-l}.
\]

\emph{(ii) \(\implies\) (iii).}
For fixed \(\epsilon_0\) and \(\epsilon_1\), the conditional means in
condition~(ii) do not depend on \(p_1\).  Hence, for every
\(p_1\in(0,1)\), averaging over
\(N_1\sim\operatorname{Bin}(V,p_1)\) gives
\[
\mathbb E[o]
=C_0\left[p_1(1-\rho\epsilon_1)
 +(1-p_1)(1-\rho\epsilon_0)\right]^V
=C_0(1-\rho\epsilon_L)^V.
\]
As \(p_1\) varies in \((0,1)\), \(\epsilon_L\) varies over the open
interval \((\epsilon_0,\epsilon_1)\), proving condition~(iii).

\emph{(iii) \(\implies\) (i).}
On the assumed open interval,
\begin{align*}
\sum_{r=0}^{V}\binom{V}{r}\epsilon_L^r
 (1-\epsilon_L)^{V-r}C_r
&=C_0(1-\rho\epsilon_L)^V\\
&=C_0\sum_{r=0}^{V}\binom{V}{r}\epsilon_L^r
 (1-\epsilon_L)^{V-r}(1-\rho)^r .
\end{align*}
Dividing by \((1-\epsilon_L)^V\), setting
\(x=\epsilon_L/(1-\epsilon_L)\), and equating coefficients of \(x^r\)
gives \(C_r=C_0(1-\rho)^r\).
\end{proof}

\subsection{Overhead of two-point \gls{mname}-ZNE}\label{sec:app_solem_zne}
For a shot containing \(l\) syndrome-1 outcomes, the reweighting factor is
\(
  w_a(l)
  =\left(\frac{q_1(a)}{p_1}\right)^l
   \left(\frac{q_0(a)}{1-p_1}\right)^{V-l}
\).
Thus, \(
  \widehat o_a=\frac{1}{N}\sum_{i=1}^N o_iw_a(|\bsk_{i}|), \,
  \E{O}_a:=\mathbb E[o\,w_a(N_1)].
\)
Assuming \(o_i\in\{-1,1\}\), independence of the
shots gives
\begin{equation}
  \operatorname{Cov}(\widehat o_{a_1},\widehat o_{a_2})
  =\frac{1}{N}\left(
    \mathbb E[w_{a_1}(N_1)w_{a_2}(N_1)]-\E{O}_{a_1}\E{O}_{a_2}
  \right).
  \label{eq:amplified-estimator-covariance-general}
\end{equation}
Consequently, the covariance has the explicit form
\begin{align}
  \operatorname{Cov}(\widehat o_{a_1},\widehat o_{a_2})
  =\frac{1}{N}\Bigg\{&
  \left[
    \frac{\epsilon_0^2(a_1-1)(a_2-1)}
         {p_1(\epsilon_1-\epsilon_0)^2}
    +\frac{(\epsilon_1-a_1\epsilon_0)
            (\epsilon_1-a_2\epsilon_0)}
           {(1-p_1)(\epsilon_1-\epsilon_0)^2}
  \right]^V
  -\E{O}_{a_1}\E{O}_{a_2}\Bigg\}.
  \label{eq:amplified-estimator-covariance-explicit}
\end{align}

Now we can use generalized least-squares for the exact model
\(\phi_{\boldsymbol{\theta}}(a)
=\theta_1(1-\theta_2a\epsilon_0)^V\),
with true parameter values
\(\theta_1=\E{O}\) and \(\theta_2=\rho\).
Let
\(
\Omega_{j,k} := N \operatorname{Cov}(\widehat o_{a_j},\widehat o_{a_k})\), 
so that by definition \(\det \Omega \geq 0\).
We have
\(
  J_{j,1}=(1-\theta_2a_j\epsilon_0)^V,
  \,
  J_{j,2}=-\theta_1Va_j\epsilon_0
  (1-\theta_2a_j\epsilon_0)^{V-1},
\)
and therefore, for two distinct amplification factors \(0< a_1<a_2\),
assuming \(0<\rho a_i\epsilon_0<1\) and \(\det \Omega > 0\), we obtain the asymptotic variance as \(N \rightarrow \infty\)
\begin{align}
  N\V(\widehat{\theta}_1)
  &= 
  \left[
    \left(J^t \Omega^{-1}J\right)^{-1}
  \right]_{11}
  =
  \left[J^{-1} \Omega (J^{-1})^t\right]_{11}
  \notag\\
  &=\frac{1}{(a_2-a_1)^2}  
  \Bigg[
  \left(
  \frac{a_2\sqrt{\Omega_{11}}}{(1-\rho a_1\epsilon_0)^{V-1}}
  +\frac{a_1\sqrt{\Omega_{22}}}{(1-\rho a_2\epsilon_0)^{V-1}}
  \right)^2
  -\frac{2a_1a_2
  \left(\sqrt{\Omega_{11}\Omega_{22}}+\Omega_{12}\right)}
  {[(1-\rho a_1\epsilon_0)(1-\rho a_2\epsilon_0)]^{V-1}}
  \Bigg] 
  \label{eq:theta-zero-variance-explicit}. 
\end{align}
The first squared term, including the prefactor \(1/(a_2-a_1)^2\), 
is the minimum overhead obtainable by estimating the two amplified 
expectation values independently and optimally allocating a total of \(N\) 
shots between them.
The second term is non-negative, 
so that the correlation between \(\hat{o}_{a_1}\) and \(\hat{o}_{a_2}\)
can only reduce the estimation variance. This is a special case of Obs.~\ref{obs:shared-shot-gls}.

We will assume that \(\epsilon_0\ll\epsilon_1\) and \(p_1\ll1\).
We obtain an approximate location of the optimum in two steps.
First, we use the variance for optimal shot allocation, 
remove the negative terms \(-\E{O}_{a_i}^2\) in \(\Omega_{ii}\), 
and approximate \( (1-\rho a \epsilon_0)^{V-1} \) as 
\(\exp(-\rho a \epsilon_0V)\)
to get
\begin{align}
&\frac{1}{(a_2-a_1)^2}
\left[
a_2\exp\!\left(
\rho V\epsilon_0a_1+
\frac{V(\epsilon_L-a_1\epsilon_0)^2}
{2p_1\epsilon_1^2}
\right)
\;+
a_1\exp\!\left(
\rho V\epsilon_0a_2+
\frac{V(\epsilon_L-a_2\epsilon_0)^2}
{2p_1\epsilon_1^2}
\right)
\right]^2
\end{align}
and the interior optimum
\begin{equation}
  a_{1,2}^{\star}
  =a_{EC} (\bar{\lambda}-1) \mp \frac{\epsilon_1}{\epsilon_0}\sqrt{\frac{p_1}{V}}
\end{equation}
where here \(a_{EC}=\epsilon_L / \epsilon_0\), and 
\begin{align}
\bar{\lambda}:=2-\rho\epsilon_1p_{1|L}.
\end{align}
The preceding calculation identifies the center of the two
amplification points.  Set
\begin{equation}
a_c:=a_{EC}(\bar{\lambda}-1),
\qquad
a_1=a_c+x_1,
\qquad
a_2=a_c+x_2,
\end{equation}
where \(x_1\ne x_2\).
Expanding the exact asymptotic variance in Eq.~\eqref{eq:theta-zero-variance-explicit} around
\(a_1=a_2=a_c\), and using
\(0<\epsilon_0\ll\epsilon_1<1\) and \(p_1\ll1\), gives
\begin{align*}
N\V(\widehat{\theta}_1)
={}&
e^{\bar{\lambda}\rho V\epsilon_L}
\left[
1+\frac{V\epsilon_L}{p_{1|L}\epsilon_1}
(\bar{\lambda}-1)^2
\right]
-\E{O}^2
\\
&\quad+O(x_1^2+x_2^2)
\\
&\quad+O\!\left(
e^{\bar{\lambda}\rho V\epsilon_L}
\left[
1+\frac{V\epsilon_L}{p_{1|L}\epsilon_1}
(\bar{\lambda}-1)^2
\right]
\left[
p_1+\frac{\epsilon_0}{\epsilon_1}
+V\left(p_1+\frac{\epsilon_0}{\epsilon_1}\right)^2
\right]
\right).
\end{align*}
Thus, the full correlated variance has a finite limit as \(a_1,a_2\rightarrow a_c, \, a_1\ne a_2 \).

\subsection{Fisher-information limit}\label{sec:fisher}
To identify the smallest variance allowed by the parametric model,
consider instead its exact conditional Bernoulli likelihood. 
Let
\begin{equation}
Z_\phi(l):=(1-\phi\epsilon_0)^{V-l}
(1-\phi\epsilon_1)^l.
\end{equation}
This is the global decay factor of the solvable model, with noiseless
continuation \(Z_\phi^{(0)}=1\). The SOLEM-GR response is therefore
\(\theta Z_\phi(l)\), where \(\theta\) is the ideal expectation value and
\(\phi\) is a nuisance parameter.
For \(o\in\{-1,1\}\) and \(z\in\{0,\ldots,V\}\), the joint distribution is
\[
 p(o,z; \theta,\phi)
=\binom{V}{z}p_1^z(1-p_1)^{V-z}
 \frac{1+o\theta Z_\phi(z)}{2},
\]
and its
maximum-likelihood estimator minimizes the negative log-likelihood
\begin{equation}\label{eq:solem-gr-maximum-likelihood-loss}
 (\widehat\theta_{\rm ML},\widehat\phi_{\rm ML})
 =\argmin_{\theta,\phi}
 \left\{-\sum_{i=1}^{N}
 \log\!\left[\frac{1+o_i\theta Z_\phi(z_i)}{2}\right]\right\}.
\end{equation}
The binomial probability of \(z_i\) is omitted because it is independent of
\((\theta,\phi)\).

At the true parameter values \(\theta=\E{O}\) and \(\phi=\rho\), define, for
\(k=0,1,2\),
\begin{equation}\label{eq:solem-gr-fisher-Sk}
 S_k:=\sum_{l=0}^{V}\binom{V}{l}p_1^l(1-p_1)^{V-l}
 \frac{Z_\rho(l)^2}{1-\E{O}^2Z_\rho(l)^2}
 \left[
 \frac{(V-l)\epsilon_0}{1-\rho\epsilon_0}
 +\frac{l\epsilon_1}{1-\rho\epsilon_1}
 \right]^k.
\end{equation}
The Fisher information per shot, evaluated at the true value, is then
\begin{equation}\label{eq:solem-gr-fisher-matrix}
 I(\E{O},\rho)
 =\mathbb{E}_{\E{O},\rho}
 \!\left[
 \partial_{(\theta,\phi)}\log p(o,z;\theta,\phi)
 \partial_{(\theta,\phi)}\log p(o,z;\theta,\phi)^t
 \right]_{(\theta,\phi)=(\E{O},\rho)}
 =\begin{pmatrix}
 S_0&-\E{O}S_1\\
 -\E{O}S_1&\E{O}^2S_2
 \end{pmatrix}.
\end{equation}
This gives the Cram\'er--Rao bound
\begin{equation}\label{eq:solem-gr-fisher-bound}
 \liminf_{N\to\infty}
 N\V_{\E{O},\rho}(\widehat{\theta}_{N})
 \geq \Gamma_{\mathrm F}
 :=\left[I(\E{O},\rho)^{-1}\right]_{11}
 =\left(S_0-\frac{S_1^2}{S_2}\right)^{-1}.
\end{equation}
Within this parametric family, this is a bound on every regular,
asymptotically unbiased estimator based on the coarsened observations
\((z_i,o_i)=(|\bsk_i|,o_i)\).  
Under the usual regularity and identifiability conditions, the
maximum-likelihood estimator in
Eq.~\eqref{eq:solem-gr-maximum-likelihood-loss} asymptotically attains it.

For \(0<\epsilon_0\ll\epsilon_1<1\) and \(p_1\ll1\), with
\(V\epsilon_0\) and \(Vp_1\) kept finite, we expand the exact binomial
expressions in Eq.~\eqref{eq:solem-gr-fisher-Sk} in \(p_1\) and
\(\epsilon_0/\epsilon_1\). We also retain the first nonvanishing contribution
of the squared conditional mean, using
\[
\frac{1}{1-\E{O}^2Z_\rho(l)^2}
=1+\E{O}^2Z_\rho(l)^2
+O\!\left(
\left[\max_{0\leq l\leq V}|\E{O}Z_\rho(l)|\right]^4
\right).
\]
Substitution into Eq.~\eqref{eq:solem-gr-fisher-bound} gives
\begin{align}
\Gamma_{\mathrm F}
={}&
\exp\!\left(\bar{\lambda}\rho V\epsilon_L\right)
\left[1+\frac{V\epsilon_L}{p_{1|L}\epsilon_1}
(\bar{\lambda}-1)^2\right]
-\E{O}^2 
\exp\!\left[
(2-\bar{\lambda})\rho V\epsilon_L
(2-\rho\epsilon_1)^2
\right]F_0
\nonumber\\
&\quad+O\!\Bigg(
e^{\bar{\lambda}\rho V\epsilon_L}
\left[1+\frac{V\epsilon_L}{p_{1|L}\epsilon_1}
(\bar{\lambda}-1)^2+F_0\right]
\left[
p_1+\frac{\epsilon_0}{\epsilon_1}
+V\left(p_1+\frac{\epsilon_0}{\epsilon_1}\right)^2
\right]\Bigg)
\nonumber\\
&\quad+O\!\left(
e^{\bar{\lambda}\rho V\epsilon_L}
\left[1+\frac{V\epsilon_L}{p_{1|L}\epsilon_1}
(\bar{\lambda}-1)^2\right]
\left(\max_{0\leq l\leq V}\left|\E{O}Z_\rho(l)\right| \right)^4
\right)
,
\label{eq:solem-gr-fisher-bound-with-mean}
\end{align}
where
\begin{equation}\label{eq:gr-F0}
F_0:=
\left[1+\rho V\epsilon_L(\bar{\lambda}-1)
(1-\rho\epsilon_1)(2-\rho\epsilon_1)\right]^2
+\frac{V\epsilon_L}{p_{1|L}\epsilon_1}(\bar{\lambda}-1)^2
(1-\rho\epsilon_1)^2.
\end{equation}

\subsection{SOLEM-GR with squared loss}
\label{sec:app_solem_gr}

Consider the unweighted nonlinear least-squares estimator
\[
(\widehat\theta_{\mathrm{LS}},\widehat\phi_{\mathrm{LS}})
=
\arg\min_{\theta,\phi}
\frac{1}{N}\sum_{i=1}^{N}
\left[o_i-\theta Z_\phi(z_i)\right]^2,
\qquad z_i=|\bsk_i|.
\]
Under the standard regularity and moment conditions for nonlinear least
squares~\cite{Jennrich1969,Hansen2022Econometrics}, its asymptotic covariance
is \(N^{-1}H^{-1}GH^{-1}\), where
\[
H=\mathbb E[g(z)g(z)^t],
\qquad
G=\mathbb E[\V(o\mid z)g(z)g(z)^t],
\qquad
g(z)=
\left.\partial_{(\theta,\phi)}[\theta Z_\phi(z)]
\right|_{(\theta,\phi)=(\E{O},\rho)}.
\]

For \(o\in\{-1,1\}\), substituting
\(\V(o\mid z)=1-\E{O}^2Z_\rho(z)^2\) and expanding under the same assumptions
used for the Fisher bound gives
\[
0\leq
\lim_{N\to\infty}
N\V_{\E{O},\rho}\!\left(\widehat{\theta}_{N}^{\mathrm{LS}}\right)
-\Gamma_{\mathrm F}
=
O\!\left(
\left[
\max_{0\leq l\leq V}
\left|
\E{O}(1-\rho\epsilon_0)^{V-l}
(1-\rho\epsilon_1)^l
\right|
\right]^4
\right).
\]
Thus, unweighted least squares agrees with the Fisher-information bound
through quadratic order in the conditional mean.

\bibliography{references.bib}

\end{document}